\numberwithin{equation}{section}
\newtheorem{Theorem}{Theorem}[section]
\newtheorem{proposition}[Theorem]{Proposition}
\newtheorem{Example}[Theorem]{Example}
\newtheorem{Remark}[Theorem]{Remark}
\newtheorem{definition}[Theorem]{Definition}
\theoremstyle{nonumberplain}
\newtheorem{proof}{Proof}
\DeclareMathOperator{\actr}{\triangleleft}
\DeclareMathOperator{\actl}{\triangleright}
\DeclareMathOperator{\coactr}{\blacktriangleleft}
\DeclareMathOperator{\bicros}{\actl\!\!\!\blacktriangleleft}
\newcommand{\sign}{\mathrm{sign}}
\DeclareMathOperator{\Hom}{\mathrm{Hom}}
\newcommand{\Moyal}{\mathbb{R}^{2d}_\theta}
\newcommand{\Mink}{\mathbb{R}^{1,d}}
\newcommand{\kM}{\mathbb{R}^{1,d}_\kappa}
\newcommand{\kP}{\mathcal{P}_\kappa}
\newcommand\Der{\mathrm{Der}}
\newcommand\dd{\mathrm{d}}
\newcommand\id{\mathrm{id}}
\newcommand{\tops}[2]{\texorpdfstring{#1}{#2}}
\numberwithin{equation}{section}
\newcommand{\tdl}[2]{\mathrm{d}^{#1}{#2}\;}
\newcommand{\omitel}[1]{\overset{#1}{\vee}}
\DeclareMathOperator{\Lin}{\textnormal{Lin}}
\DeclareMathOperator{\Supp}{\textnormal{Supp}}
\DeclareMathOperator{\Ker}{\textnormal{Ker}}
\newcommand{\institute}[1]{\newcommand{\@institute}{#1}}
\renewcommand{\maketitle}{
{
\center\LARGE\noindent\@title\par
}%
\vspace{1.5\baselineskip}
{
\center\normalsize\noindent\ignorespaces\@author\par
}%
\vspace{0.5\baselineskip}
{
\center\normalsize\ignorespaces\@institute\par
}%
\vspace{2\baselineskip}
}%
\begin{document}

\title{$\kappa$-Minkowski as tangent space I : quantum partition of unity}
\author{Kilian Hersent${}^{a,b}$, Jean-Christophe Wallet${}^{a}$}
\institute{%
\textit{${}^a$ IJCLab, Universit\'e Paris-Saclay, CNRS/IN2P3, 91405 Orsay, France}\\%
\textit{${}^b$ Departamento de Física, Universidad de Burgos, 09001 Burgos, Spain}\\
\bigskip
e-mail:  
\href{mailto:khersent@ubu.es}{\texttt{khersent@ubu.es}}

\href{mailto:jean-christophe.wallet@universite-paris-saclay.fr}{\texttt{jean-christophe.wallet@universite-paris-saclay.fr}}
}%
\maketitle

\begin{abstract}
    We define a quantum (noncommutative) analogue of locally trivial tangent bundle based on two main elements: the definition of local algebras through quotients of ideals of the global algebra as introduced in \cite{Calow_1999}, and the triviality of the local tangent space as being the $\kappa$-Minkowski space inspired from \cite{Brzezinski_1993}. This tangent bundle is explicitly constructed via local coordinate charts. Every local objects are exported to the global algebra through the notion of quantum (noncommutative) partition of unity introduced in this purpose. This partition is also used to export consistently an integral on $\kappa$-Minkowski to an integral on the global algebra.
\end{abstract}
\vspace{20pt}
\tableofcontents
\newpage

\section{Introduction}
\label{sec:intro}
\paragraph{}
Quantum gravity is one of the fundamental question of theoretical physics aiming to properly characterise the behaviour of the gravitational interaction at ultra short distances and very high energy. Recent studies focusing on extreme astrophysical and cosmological events and exploring
the deep quantum regime seem to provide a promising way toward possible observational tests. For a recent review, see \cite{Addazi_2022}.

Various theories and approaches to this challenging question have been considered in the literature so far. From these often emerges a consensus that quantum gravity should lead at an effective regime to a quantum spacetime. Such an object can be suitably described within the framework of noncommutative geometry.

\paragraph{}
This paper is the first of a series aiming to construct a theory of gravity on quantum spacetime in which the properties of this latter, which can be called "quantum base space", are rigidly linked to the properties of a quantum (analogue of) coordinate space, assumed in the following to be the $\kappa$-Minkowski space. This thus mimics in some sense the commutative situation. The idea underlying this construction can be viewed as an extension of the usual situation of commutative manifolds. Indeed, the Minkowski space $\Mink$ arises in coordinate charts $\varphi^\alpha: U_\alpha \to \Mink$, where $U_\alpha$ is a local open set of the full spacetime manifold $\mathcal{M}$. The charts allow to export properties from $\Mink$ to the manifold. In such a formalism, the tangent space of $U_\alpha$, noted $TU_\alpha$, can be shown to be trivial, that is 
\begin{align}
    TU_\alpha \simeq U_\alpha \times \Mink.
    \label{eq:int_loc_tan}
\end{align}
This identity is the one giving rise to the tetrad formalism. Using this correspondence, the properties of $U_\alpha$ can be related to $TU_\alpha$ and thus to the Minkowski space. Still, this statement only involves a local part of the full spacetime. One way to export the objects of $U_\alpha$ to the full $\mathcal{M}$ is to use a partition of unity $\chi$. This last part is usually not exploited in the physics literature, but as exemplified below it is necessary in the coherent definition of global objects, such as an integral over $\mathcal{M}$.

\paragraph{}
In this paper, we adapt the above formulation to a noncommutative framework. The base space $\mathcal{M}$ is described by its algebra of smooth functions $\mathcal{C^\infty(M)}$. This commutative algebra is traded for a noncommutative algebra $\mathcal{A}$, possibly referred to as the base space algebra. As $\mathcal{M}$ is dynamical in general relativity, the algebra $\mathcal{A}$ is also considered dynamical. For now, this only means that it is not fixed {\it{ab initio}}. In order to mimic the local structure of $\mathcal{M}$ as a patch of open sets $\{U_\alpha\}_\alpha$, the quantum base space $\mathcal{A}$ is equipped with a covering of algebra $\{\mathcal{A}_\alpha\}_\alpha$, as done in \cite{Calow_1999}. Here, the local
\footnote{
The notion of locality here (which we refrain to call quantum locality) is a mere extension of the commutative one where the manifold is made of a patch of open sets $\{U_\alpha\}_\alpha$ that are ``local'' open sets. An algebra modelling the quantum base space is defined similarly as a patch of algebra $\{\mathcal{A}_\alpha\}_\alpha$, which elements ($\mathcal{A}_\alpha$) are called ``local''. In other words, the subscript $\alpha$ indicates that we are working in a "localised" region of (quantum) spacetime.
}
algebra $\mathcal{A}_\alpha$ is the noncommutative counterpart of the set of smooth functions $\mathcal{C}^\infty(U_\alpha)$. We further require that the local algebra $\mathcal{A}_\alpha$ are linked to $\kappa$-Minkowski, noted $\kM$, by a coordinate chart, so that the star-product of $\kM$ can be exported to a star-product in $\mathcal{A}_\alpha$.

\paragraph{} 
A missing ingredient is at the core of this paper: the tangent space. As recalled above the local tangent space relies (for pseudo-Riemannian manifold) on the Minkowski space through \eqref{eq:int_loc_tan}. Therefore, we construct, thanks to the coordinate charts, the the noncommutative analogue of \eqref{eq:int_loc_tan} (to be characterised), involving the $\kappa$-Minkowski spacetime. In order to do so, one needs to turn to the set of local vector fields, corresponding, in the noncommutative formalism, to the derivations of $\mathcal{A}_\alpha$. As explained in the ensuing discussion, \eqref{eq:int_loc_tan} is generalised as 
\begin{align}
    \Der(\mathcal{A}_\alpha) = \mathcal{Z(A_\alpha)} \otimes \kM,
    \label{eq:int_loc_der}
\end{align}
where $\mathcal{Z(A_\alpha)}$ denotes the centre of the algebra $\mathcal{A}_\alpha$. This is shown to have strong connections with the quantum fiber bundles defined in \cite{Brzezinski_1993}. Local forms are built in a dual way.

\paragraph{}
Finally, we introduce in this paper the notion of a noncommutative partition of unity. The aim is to define properly some global quantities, such as a global star-product, global derivations and forms, or a global integral, based on the local versions of these quantities. It completes the setting and allow to export the properties of $\kM$ to the whole quantum spacetime in a coherent way. We also ensure that these global objects do not depend on the choice of the partition nor of the covering of algebra thanks to a specific property of the partition of unity, already present in the commutative case.

\paragraph{}
The paper is organised as follows. First, we develop the case of the quantum analogue of a trivial tangent bundle in section \ref{sec:triv}. This is the noncommutative counterpart of the case where $\{U_\alpha\}_\alpha = \{\mathcal{M}\}$, \textit{i.e.}\ one may consider a single patch formed with the full manifold, for which global coordinates can be defined. The bridge between commutative and quantum quantities is made for the star-product, vector fields and forms. Note that one could consider the section \ref{sec:triv} as an application of the quantum fiber bundle of \cite{Brzezinski_1993} with $\kappa$-Minkowski as fibers, to some extent. We also define the notions of connection and curvature on the trivial bundle, which can be exported as such in the locally trivial version.

However, in the commutative setting not all spacetimes have a trivial tangent bundle, so that one has to define the notion of local triviality. Therefore, we export the notion of triviality to the one of local triviality in section \ref{sec:loc_triv}.

In the section \ref{sec:ncpu}, we introduce a quantum (noncommutative) version of the partition of unity. The commutative notion of partition of unity is recalled in section \ref{subsec:part_unit}. Its noncommutative version is defined and discussed in \ref{subsec:noncopar-comod}, together with a notion of covering of algebra. The use of the partition is made to define global objects in sections \ref{subsec:loc_triv_diff_calc} and \ref{subsec:loc2glob_quant}. In section \ref{sec:conc}, we discuss the results and conclude.

\paragraph{Notations and conventions:}
In the following, $\mathcal{M}$ denotes a smooth $d+1$-dimensional pseudo-Riemannian manifold with corresponding algebra of smooth functions $\mathcal{C^\infty(M)}$. The set of vector fields denoted by $\Gamma(\mathcal{M})$, corresponds to the Lie algebra of derivations acting on (smooth) functions. An atlas $\{(U_\alpha,\varphi^\alpha)\}_\alpha$ is as usual defined as a covering of open set $\{U_\alpha\}_\alpha$ of $\mathcal{M}$ and coordinate maps $\varphi_\alpha : U_\alpha \to \mathbb{R}^{1,d}$. Otherwise stated, the indices labelling the atlases $\alpha,\beta$ take values in a possibly infinite set and {\it{do not}} obey the usual summation convention. The spacetime indices $\mu,\nu,\lambda,\tau \in \{0, \dots, d\}$ follow summation convention of repeated indices. $\Omega^\bullet(\mathcal{M})$ denotes the exterior algebra of forms on $\mathcal{M}$. $\mathcal{A}$, $\mathcal{A}_\alpha$ refer to ${}^*$-algebras with involution denoted hereafter by ${}^*$. The indices for the covering of algebras $\alpha,\beta$ introduced in the course of the analysis follow the same rule as above. $\mathcal{Z(A)}$ denotes the cenre of $\mathcal{A}$. $\Der(\mathcal{A})$ and $\Omega^\bullet(\mathcal{A})$ denotes respectively the Lie algebra of derivations and the exterior algebra of forms of $\mathcal{A}$. The set of linear functions $f:\mathcal{A}\to \mathcal{B}$, where $ \mathcal{A}$ and $\mathcal{B}$ are two algebras, is denoted by $\Lin(\mathcal{A}, \mathcal{B})$, with the composition law $\circ$. Finally, the support of a function $\chi$ is denoted by $\Supp(\chi)$. The $\mathbb{C}$-linear span of a set $S$ is denoted $\mathrm{Span}_\mathbb{C}(S)$. Other additional notations are introduced throughout the paper.

To be definite, we consider in the sequel the $\kappa$-Minkowski space, denoted by $\kM$. Through this paper, the word ``quantum'' stands for noncommutative. We will use both indifferently. When dealing with Hopf algebra structures, the usual Sweedler notations is used, \emph{i.e.}\ one writes generically $\Delta(a) =  a_{(1)} \otimes a_{(2)}$ with summation understood.

\section{A quantum analogue of a trivial tangent bundle}
\label{sec:triv}

In this section, we focus, as a warm up, on a framework which may be viewed as a quantum (noncommutative) analogue of a \emph{trivial} tangent bundle.

\subsection{The star-product quantisation with global coordinates}
\label{subsec:gc_q}
\paragraph{}
We consider here the pseudo-Riemannian manifold $\mathcal{M}$ to have globally defined coordinates $\varphi : \mathcal{M} \to \Mink$, where $\varphi$ is a smooth diffeomorphism. One can define the pull-back\footnote{
    Note that usual notations of the pull-back are rather $\varphi^*$, but the ${}^*$ is dropped throughout the paper to avoid confusion with involutions.
} 
of $\varphi$ as $\varphi(f) = f \circ \varphi$, for any $f \in \mathcal{C}^\infty(\Mink)$, that is
\begin{align} 
    \varphi : \mathcal{C}^\infty(\Mink) \to \mathcal{C}^\infty(\mathcal{M}).
    \label{eq:qfst_pb}
\end{align}
One can further show that the pull-back is a homomorphism of algebra since $\varphi(fg) = \varphi(f) \varphi(g)$.

\paragraph{}
The noncommutative generalisation of the previous discussion considers $(\mathcal{A}, \star)$, an associative ${}^*$-algebra, to be the quantum version of the smooth functions $\mathcal{C}^\infty(\mathcal{M})$ and similarly $(\kM, \star_\kappa)$ to be the quantum version of $\mathcal{C}^\infty(\Mink)$. Then, we define a coordinate transformation as an invertible homomorphism 
\begin{align}
    \varphi : \kM \to \mathcal{A}
    \label{eq:qfst_qpb}
\end{align}
where \eqref{eq:qfst_qpb} is motivated by \eqref{eq:qfst_pb}. The fact that $\varphi$ is an invertible homomorphism allows one to relate the star-products of $\mathcal{A}$ and $\kM$ through
\begin{align}
    f \star g
    = \varphi^{-1} \big(\varphi(f) \star_\kappa \varphi(g) \big),
    \label{eq:qfst_loc_sp_rel}
\end{align}
for any $f, g \in \mathcal{A}$. The equation \eqref{eq:qfst_loc_sp_rel} states that the quantisation we impose on $\mathcal{A}$ is fully determined by the quantisation chosen for the Minkowski spacetime (taken here to be the so-called $\kappa$-deformation).

\paragraph{}
An important point needs to be made here. The previous expression \eqref{eq:qfst_loc_sp_rel} explicitly depends on the choice of $\varphi$, that is on the choice of coordinates. Therefore, considering another coordinate choice $\tilde{\varphi}$ generates another star-product $\tilde{\star}$ through \eqref{eq:qfst_loc_sp_rel}. Then, it is matter of algebra to show that
\begin{align}
    f\ \tilde{\star}\ g
    = (\varphi^{-1} \circ \tilde{\varphi})^{-1} \Big( (\varphi^{-1} \circ \tilde{\varphi})(f) \star (\varphi^{-1} \circ \tilde{\varphi})(g) \Big)
    \label{eq:qfst_eq_sp}
\end{align}
for any $f,g \in \mathcal{A}$. The previous expression make explicit the change of coordinates $\varphi^{-1} \circ \tilde{\varphi} : \mathcal{A} \to \mathcal{A}$. A consistency condition would be to require that the physical model build on $\mathcal{A}$ is invariant under any coordinate change of the form above, thus making contact with the diffeomorphism invariance in the commutative setting. Note that $\star$ and $\tilde{\star}$ are belonging to the same equivalent class of deformation of $\mathcal{A}$ (see for example Prop 2.14 of \cite{Bursztyn_2002}). In other words, a condition of diffeomorphism invariance imposes that the physical quantities remain the same for any class of equivalent star-products.

\paragraph{}
Finally, note that the existence of the previous $\varphi$ maps can be justified by formal deformation theory. Indeed, if we consider the local algebra to correspond to some formal power series of $\kappa$, that is $\mathcal{A} = \mathcal{C}^\infty(\mathcal{M})[[\kappa]]$, then one can define its product as
\begin{align}
    f \star g
    = fg + \sum_{n=0}^\infty \frac{1}{\kappa^n} \mu_n(f,g),
\end{align}
where $f,g \in \mathcal{A}$. Correspondingly, we can write the product on $\kappa$-Minkowski \eqref{eq:kM_star-kappa} in terms of formal power series
\begin{align}
    F \star_\kappa G
    = FG + \sum_{n=0}^\infty \frac{1}{\kappa^n} \tilde{\mu}_n(F,G),
\end{align}
for any $F,G \in \kM = \mathcal{C}^\infty(\Mink)[[\kappa]]$. In the previous expressions, $\mu_n$ and $\tilde{\mu}_n$ are bilinear operators that should be associative and satisfy the Bianchi identities. Therefore, the requirement that $\varphi$ is a homomorphism amounts to
\begin{align}
    \mu_n(f, g)
    = \varphi^{-1} \circ \tilde{\mu}_n \big( \varphi(f), \varphi(g) \big).
    \label{eq:qfst_formal_pb}
\end{align}
In the case where the $\tilde{\mu}_n$ contains only derivatives of the functions (which is the case of the star-product \eqref{eq:kM_star-kappa}), then the commutative pull-back $\varphi$ of \eqref{eq:qfst_pb} satisfies \eqref{eq:qfst_formal_pb}.

\subsection{From global coordinates to triviality}
\label{subsec:gctt}
\paragraph{}
In a commutative situation (see e.g.\ \cite{Gockeler_1987, Rudolph_2012, Rudolph_2017}), the tangent bundle $T\mathcal{M}$, of a pseudo-Riemannian manifold $\mathcal{M}$, is trivial if it is of the form
\begin{equation}
    T\mathcal{M} \simeq \mathcal{M} \times \Mink.
    \label{eq:triv_tangent_space}
\end{equation}
As a vector bundle, $T\mathcal{M}$ can be conveniently characterised with the help of the set of vector fields $\Gamma(\mathcal{M})$. Now, the triviality \eqref{eq:triv_tangent_space} of $T\mathcal{M}$ is equivalent to the existence of a globally defined frame, says $e \in L\mathcal{M}$ ($L\mathcal{M}$ denotes the frame bundle) such that for any $x\in\mathcal{M}$, the family of (global) sections $\{e_\mu\}_\mu$, $e_\mu:\mathcal{M}\to T\mathcal{M}$, is a basis of $\Mink$. Accordingly, any vector field $X\in\Gamma(\mathcal{M})$, $X : \mathcal{M} \to T\mathcal{M}$, can be expressed as $X = X^\mu e_\mu$ where $X^\mu \in \mathcal{C^\infty(M)}$ which yields to the following identification
\begin{align}
    \Gamma(\mathcal{M}) 
    \simeq \mathcal{C^\infty(M)} \otimes \Mink,
    \label{eq:triv_vector_fileds_sim}
\end{align}
with of course any $e_\mu(x)$ acting as a derivation of $\mathcal{C^\infty(M)}$, so that one can actually write $\Gamma(\mathcal{M})=\Der(\mathcal{C^\infty(M)})$. By using standard duality, \eqref{eq:triv_tangent_space} translates into
\begin{equation}
    \Omega^1(\mathcal{M}) \simeq \mathcal{C^\infty(M)} \otimes \Mink,
    \label{eq:commut-1form}
\end{equation}
where $\Omega^1(\mathcal{M})$ denotes as usual the space of 
1-forms on $\mathcal{M}$.

\paragraph{}
In order to extend the above framework to a noncommutative setting, it is convenient to start from the notion of vector field acting on $\mathcal{C^\infty(M)}$. As noncommutative analogue of this object, we choose the notion of derivation of $\mathcal{A}$, noted $\Der(\mathcal{A})$. This is known to be a natural choice.

Having in mind future application to gauge theories on quantum spaces, it is tempting to replace the factor $\Mink$ in \eqref{eq:triv_tangent_space} and \eqref{eq:triv_vector_fileds_sim} by the $\kappa$-Minkowski space $\kM$, which now plays, roughly speaking, the role of ``noncommutative  (quantum)'' tangent space. In this respect, the algebra $\mathcal{A}$ introduced above plays the role of a ``noncommutative  (quantum)'' base space, provided a suitable action of a set of elements of $\kM$ as derivation of $\mathcal{A}$ can be defined.

Various approaches toward a suitable notion of quantum fiber bundle have been elaborated and used, see \cite{Brzezinski_1993}, \cite{Schneider}, \cite{Pflaum_1994} (see also \cite{Aschieri_2021b}, \cite{Aschieri_2021c}). We adapt these in this paper to set-up a natural framework implementing the above scheme.

\paragraph{}
Recall that $\kM$ is defined as the dual Hopf algebra of the deformed translations, \textit{i.e.}\ $\kM = \mathcal{T}_\kappa^\prime$, where $\mathcal{T}_\kappa $ is a Hopf subalgebra of the $\kappa$-Poincar\'e algebra $\kP$ (see the Appendix \ref{ap:kM} for conventions). This duality allows to compute the full Hopf algebra structure of $\kM$ \cite{MR1994}. Indeed, $\kM$ is generated by $\{p_\mu\}_{\mu \in \{0, \dots, d\}}$ with
\begin{subequations}
\begin{align}
    [p_0, p_j] &= \frac{i}{\kappa} p_j, &
    [p_j, p_k] &= 0, 
    \label{eq:kM_kM_Hopf_alg_alg}\\
    \Delta(p_\mu) &= p_\mu \otimes 1 + 1 \otimes p_\mu, &
    S(p_\mu) &= - p_\mu,
    \label{eq:kM_kM_Hopf_alg_coalg}
\end{align}
    \label{eq:kM_kM_Hopf_alg}
\end{subequations}
where $[f, g] = f \star_\kappa g - g \star_\kappa f$ and $\star_\kappa$ is the associative star-product of $\kM$. The Lie-algebra structure carried by $\kM$ is apparent from \eqref{eq:kM_kM_Hopf_alg_alg} while its Hopf-algebra structure is obvious from \eqref{eq:kM_kM_Hopf_alg_coalg}. Note that we denote the generators of $\kM$ by $p_\mu$ instead of the traditional notation $x_\mu$ prevailing in the physics literature in view of the role they play.

For convenience of notations, let $ \mathfrak{D}_\kappa$ denotes the following $\mathbb{C}$-linear space of elements of $\kM$, 
\begin{equation}
    \mathfrak{D}_\kappa := \mathrm{Span}_\mathbb{C} \big\{ p_\mu, \mu\in \{0, \dots, d\} \big\}.
    \label{eq:triv_Dk_def}
\end{equation}
Similarly, we note $\mathfrak{D}_\kappa'$ its dual Hopf algebra consisting of elements of $\mathcal{T}_\kappa$. It is characterised by 
\begin{equation}
    \mathfrak{D}_\kappa^\prime = \mathrm{Span}_\mathbb{C} \big\{\mathfrak{X}^\mu, \mu\in\{0, \dots, d\} \big\},
\end{equation}
with $\mathfrak{X}^0 = \kappa(1-\mathcal{E})$, $\mathfrak{X}^j = P^j,\ \  j\in\{1, \dots, d\}$, the generators of $\mathcal{T}_\kappa$ (see Appendix \ref{ap:kM}).

\paragraph{}
Consider now that $\mathcal{M}$ has global coordinates as in section \ref{subsec:gc_q}. This allows to define a coaction $\coactr: \mathcal{A} \to \mathcal{A} \otimes \mathfrak{D}_\kappa'$. The coaction we consider here writes
\begin{align}
    \coactr f
    &= \varphi^{-1} \big( \mathfrak{X}_\mu \actl_\kappa \varphi(f) \big) \otimes \mathfrak{X}^\mu
    \label{eq:gc_kTcoact}
\end{align}
where $\actl_\kappa$ is the action of $\kP$ on $\kM$ \eqref{eq:kM_kP_action}. Thus, the algebra $\mathcal{A}$ becomes a $\mathcal{T}_\kappa$-comodule algebra%
\footnote{%
    Given a Hopf algebra $H$ with coproduct $\Delta$ and counit $\epsilon$, recall that a right $H$-comodule algebra $\mathcal{A}$ is an algebra with coaction map $\coactr : \mathcal{A} \to \mathcal{A}\otimes H$ satisfying $(\coactr \otimes \text{id}_H) \circ \coactr = (\text{id}_\mathcal{A} \otimes \Delta) \coactr$, $(\text{id}_\mathcal{A} \otimes \epsilon) \coactr = \text{id}_\mathcal{A} $ with $\coactr$ being an algebra homomorphism.
}.
For details on comodule algebras and related topics, see e.g.\ \cite{Montgomery_1993} (Chapter 4) or \cite{Delvaux_2009}. This implies that $\mathcal{A}$ is a $\kM$-module algebra%
\footnote{%
    Given a Hopf algebra $H$ with coproduct $\Delta$ and counit $\epsilon$, recall that a left $H$-module algebra $\mathcal{A}$ is an algebra with action map $\actl : H \otimes \mathcal{A} \to \mathcal{A}$ satisfying $\actl \circ (\text{id}_H \otimes m) = m \circ (\actl \otimes \actl) \circ (\text{id}_H \otimes \tau \otimes \text{id}_\mathcal{A}) \circ (\Delta \otimes \text{id}_\mathcal{A} \otimes \text{id}_\mathcal{A})$ and $\actl \circ (\text{id}_H \otimes 1_\mathcal{A}) = 1_\mathcal{A} \circ \epsilon$ where $m:\mathcal{A} \otimes \mathcal{A} \to \mathcal{A}$ is the product on $\mathcal{A}$, $1_\mathcal{A}: \mathbb{C} \to \mathcal{A}$ is the unit of $\mathcal{A}$ and $\tau : H \otimes \mathcal{A} \to \mathcal{A} \otimes H$ is the flip map. This is equivalent to $m$ and $1_\mathcal{A}$ being left $H$-module homomorphisms.
}
with an action given for $p_\mu \in \mathfrak{D}_\kappa$ a generator and any $f \in \mathcal{A}$, by%
\footnote{%
    Consider $H$ to be a Hopf algebra, $H'$ its dual Hopf algebra with dual pairing $\langle \cdot, \cdot \rangle : H \times H' \to \mathbb{C}$, and $\mathcal{A}$ a right $H$-comodule algebra, with coaction $\coactr : \mathcal{A} \to \mathcal{A} \otimes H$. Then $\mathcal{A}$ is a left $H'$-module algebra with the action defined as
    \begin{align}
        p \actl f
        = \langle f_{\langle 2\rangle}, p \rangle f_{\langle 1 \rangle}, &&
        p \in H',\ f \in \mathcal{A},
        \label{eq:modalg_loc_act}
    \end{align}
    where we used the Sweedler notations $\coactr f = f_{\langle 1 \rangle} \otimes f_{\langle 2 \rangle}$ (so that $f_{\langle 2 \rangle} \in H$).
}
\begin{align}
    p_\mu \actl f
    &= \varphi^{-1} \big( \mathfrak{X}_\mu \actl_\kappa \varphi (f) \big)
    = \varphi^{-1} \left( \partial^\mu \varphi(f) \right)
    \label{eq:gc_kMact}
\end{align}
where $\partial^\mu = \frac{\partial}{\partial p_\mu}$ denotes the derivation with respect to the $\mu$-th coordinate in $\kappa$-Minkowski.

\subsection{\tops{$\kappa$}{kappa}-Minkowski space as a quantum tangent space}
\label{subsec:triv_tangent_sp}
\paragraph{}
Let $\Der(\mathcal{A})$ denotes the set of derivations of $\mathcal{A}$. For $\kM$ to be promoted as a constituent of a notion of quantum tangent space, it seems natural to require that (some of) its elements act as derivations of $\mathcal{A}$ in the spirit of the derivation-based differential calculus framework (for a review see \cite{Dubois_2002}).

\paragraph{}
First, one has the following useful properties satisfied by the linear structure carried by $\kM$.
\begin{proposition}\label{prop21}
    One has the following:
    \begin{enumerate}[label=\roman*)]
        \item \label{it:triv_Dk_der}
        $\mathfrak{D}_\kappa \subset \Der(\mathcal{A})$ and this inclusion is injective,
        \item \label{it:triv_Dk_Lie}
        $\mathfrak{D}_\kappa$ is a Lie algebra when endowed with the Lie bracket \eqref{eq:kM_kM_Hopf_alg_alg},
        \item \label{it:triv_Dk_mod}
        $\mathfrak{D}_\kappa$ is a $\mathcal{Z}(\mathcal{A})$-module for the product defined by $(zp_\mu)\actl f=z(p_\mu\actl f)$, for any $z\in\mathcal{Z(A)}$, $p_\mu\in\mathfrak{D}_\kappa$, $f\in\mathcal{A}$.
    \end{enumerate}
\end{proposition}
\begin{proof}
    From \eqref{eq:gc_kMact}, $\mathcal{A}$ is a left $\kM$-module algebra. Therefore, using Sweedler notations, $p_\mu \actl (fg) = \sum (p_{\mu(1)} \actl f) (p_{\mu(2)} \actl g)$, which combined with \eqref{eq:kM_kM_Hopf_alg_coalg} yields
    \begin{align}
	   p_\mu \actl (fg) 
	   = (p_{\mu(1)} \actl f) (p_{\mu(2)} \actl g)
	   = (p_\mu \actl f) g + f (p_\mu \actl g),
	   \label{eq:triv_module_alg_def}
    \end{align}
    for any $f,g \in\mathcal{A}$, where the first equality expresses the fact that the product in $\mathcal{A}$ is a left $\kM$-module homomorphism and the second equality is implied by \eqref{eq:kM_kM_Hopf_alg_coalg}. From the rightmost equality in \eqref{eq:triv_module_alg_def}, one concludes that any $p_\mu\in\mathfrak{D}_\kappa$ obeys a Leibnitz rule. Hence, $ \mathfrak{D}_\kappa \subset \Der(\mathcal{A})$. Moreover, let us define the coordinate function
    \begin{align}
        x^\mu = \varphi^{-1}(p_\mu).
        \label{eq:gc_coord_def}
    \end{align}
    Consider $\alpha^\mu p_\mu, \beta^\nu p_\nu \in \mathfrak{D}_\kappa$, with $\alpha^\mu, \beta^\nu \in \mathbb{C}$, that satisfy $(\alpha^\mu p_\mu) \actl f = (\beta^\nu p_\nu) \actl f$, for any $f \in \mathcal{A}$. Then, $(\alpha^\mu - \beta^\mu) (p_\mu \actl f) = 0$ and considering $f = x^\nu$ of \eqref{eq:gc_coord_def} together with the definition \eqref{eq:gc_kMact}, one obtains that $\alpha^\nu = \beta^\nu$. This directly implies that $\mathfrak{D}_\kappa \to \Der(\mathcal{A})$ with $p_\mu \mapsto p_\mu \actl \cdot$ is injective. Therefore, \ref{it:triv_Dk_der} is proven.
    
    Now, for any $z \in \mathcal{Z}(\mathcal{A})$, $p_\mu \in \mathfrak{D}_\kappa$, simply write $(z p_\mu) \actl (f g) = z( (p_\mu \actl f) g + f (p_\mu \actl g) ) = (z p_\mu) \actl f) g + f z (p_\mu \actl g) = ((z p_\mu) \actl f) g + f ((z p_\mu) \actl g)$ where we used $z f = f z$ in the second equality. Hence $\mathfrak{D}_\kappa$ is a $\mathcal{Z}(\mathcal{A})$-module as claimed in \ref{it:triv_Dk_mod}. 
    
    Finally, the Lie algebra structure of $\mathfrak{D}_\kappa$ claimed in \ref{it:triv_Dk_Lie} stems simply from \eqref{eq:kM_kM_Hopf_alg_alg}.
\end{proof}

At this stage, two comments are in order.
\begin{enumerate}
    \item Choosing $\mathfrak{D}_\kappa$, defined by \eqref{eq:triv_Dk_def}, in \eqref{eq:derA_def} instead of the full $\kM$ \eqref{eq:kM_kM_Hopf_alg} is motivated by the fact that $\mathfrak{D}_\kappa$ has a finite number of generators as a Lie algebra, contrary to $\kM$ which has infinitely many. As far as physical applications are concerned, a differential calculus based on the full $\kM$ would then give rise to gauge potentials and curvature with infinite number of components. 
    \item Even if Proposition \ref{prop21} uses that $\mathcal{A}$ is a $\mathfrak{D}_\kappa$-module algebra, we wished to underline that it can be inherited form a $\mathfrak{D}_\kappa'$-comodule algebra structure \eqref{eq:gc_kTcoact}. This proves useful to make some connection with the framework given in \cite{Brzezinski_1993}. 
\end{enumerate}

We are now in position to define a reasonable quantum analogue of the space of vector fields $\Gamma(\mathcal{M})$. In fact, the above discussion and the structure of eqn.\ \eqref{eq:triv_vector_fileds_sim} point toward
$\mathcal{Z(A)} \otimes\mathfrak{D}_\kappa\subset\mathcal{Z(A)} \otimes \kM$ as a natural quantum version of $\Gamma(\mathcal{M})$. By a slight abuse of language, it is called the quantum space of vector fields in this paper, although its elements are not {\it{stricto sensu}} vector fields. Note that the occurrence of the factor $\mathcal{Z(A)}$, instead of $\mathcal{A}$, insure as usual that the corresponding subset of derivations is a $\mathcal{Z(A)}$-module. Thus, we set the following:

\begin{definition}\label{def:derA_def}
Define $\Der_R(\mathcal{A}) := \mathcal{Z(A)} \otimes \mathfrak{D}_\kappa$, the quantum space of vector fields, as
\begin{align}
	\Der_R(\mathcal{A}) 
	:= \mathrm{Span}_\mathbb{C} \big\{z\otimes p_\mu\in\mathcal{Z(A)} \otimes \mathfrak{D}_\kappa\ /\ (z \otimes p_\mu) (f)
	= z(p_\mu \actl f),\ \ \forall z\in\mathcal{Z(A)}, \ \forall f\in\mathcal{A} \big\}
\label{eq:derA_def}
\end{align}
\end{definition}

\begin{proposition}\label{prop:derA_prop}
    The following properties hold true.
    \begin{enumerate}[label = \roman*)]
        \item \label{it:triv_coDk_der}
        $\Der_R(\mathcal{A}) \subset \Der(\mathcal{A})$ and this inclusion is injective,
        \item \label{it:triv_coDk_mod}
        $\Der_R(\mathcal{A})$ is a $\mathcal{Z(A)}$-module,
        \item \label{it:triv_coDk_Lie}
        $\Der_R(\mathcal{A})$ has a Lie algebra structure when equipped with the following Lie bracket $[X,Y](f) = X(Y(f)) - Y(X(f))$ for any $X, Y \in \Der_R(\mathcal{A})$, $f \in \mathcal{A}$.
    \end{enumerate}
\end{proposition}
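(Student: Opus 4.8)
The plan is to first bring the hypothesis into the setting of Proposition~\ref{prop21}, after which the three items reduce to the Leibniz rule together with one structural fact about derivations. Since $\mathcal{A}$ is assumed to be a right $\mathcal{T}_\kappa$-comodule algebra and $\kM=\mathcal{T}_\kappa^\prime$, the general duality recalled in the third comment following Proposition~\ref{prop21} makes $\mathcal{A}$ a left $\kM$-module algebra. All conclusions of Proposition~\ref{prop21} are then available: each $p_\mu$ acts as a derivation, $\mathfrak{D}_\kappa$ is closed as a Lie algebra under \eqref{eq:kM_kM_Hopf_alg_alg}, and $\mathfrak{D}_\kappa$ is a $\mathcal{Z(A)}$-module. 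Items \ref{it:triv_coDk_der} and \ref{it:triv_coDk_mod} are then quick, and the substance lies in \ref{it:triv_coDk_Lie}.

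For item~\ref{it:triv_coDk_der} I would note that a simple tensor $z\otimes p_\mu$ acts by $a\mapsto z(p_\mu\actl a)$, which is a derivation: left multiplication by the central element $z$ preserves the Leibniz rule \eqref{eq:triv_module_alg_def}, exactly as in the proof of item~\ref{it:triv_Dk_mod}. By linearity a general element of $\mathcal{Z(A)}\otimes\mathfrak{D}_\kappa$ acts as a finite sum of such maps, hence again as a derivation, giving $\Der_R(\mathcal{A})\subset\Der(\mathcal{A})$. For item~\ref{it:triv_coDk_mod}, the $\mathcal{Z(A)}$-module structure is the one carried by the left tensor factor, $w\cdot(z\otimes p_\mu)=(wz)\otimes p_\mu$; under the identification of Definition~\ref{def:derA_def} this amounts to multiplying the associated derivation on the left by $w\in\mathcal{Z(A)}$, which again yields a map of the required form, so $\Der_R(\mathcal{A})$ is closed under the action.

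The core is item~\ref{it:triv_coDk_Lie}, and the crucial preliminary observation is that \emph{every} derivation of $\mathcal{A}$ stabilizes the center: if $D\in\Der(\mathcal{A})$ and $z\in\mathcal{Z(A)}$, applying $D$ to $za=az$ and using centrality of $z$ gives $D(z)a=aD(z)$ for all $a$, i.e.\ $D(z)\in\mathcal{Z(A)}$; in particular $p_\mu\actl z\in\mathcal{Z(A)}$. With this in hand I would compute the bracket on simple tensors $X=z\otimes p_\mu$, $Y=w\otimes p_\nu$ by expanding with the Leibniz rule, the two second-order terms combining into the Lie bracket via $zw=wz$:
\begin{align}
    [X,Y](a)
    = z(p_\mu\actl w)(p_\nu\actl a) - w(p_\nu\actl z)(p_\mu\actl a) + zw\big([p_\mu,p_\nu]\actl a\big).
    \label{eq:plan_bracket}
\end{align}
Each summand then lies in $\mathcal{Z(A)}\otimes\mathfrak{D}_\kappa$: in the two cross terms the coefficients $z(p_\mu\actl w)$ and $w(p_\nu\actl z)$ are central by the preliminary observation, while in the last term $[p_\mu,p_\nu]\in\mathfrak{D}_\kappa$ by item~\ref{it:triv_Dk_Lie}, which by \eqref{eq:kM_kM_Hopf_alg_alg} is a scalar multiple of a generator, so $zw\,[p_\mu,p_\nu]$ again has a central coefficient. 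Extending bilinearly shows $\Der_R(\mathcal{A})$ is closed under the commutator. The remaining axioms come for free: antisymmetry is manifest, and the Jacobi identity holds automatically for the commutator of operators on $\mathcal{A}$, so $\Der_R(\mathcal{A})$ is a Lie subalgebra of $\big(\Der(\mathcal{A}),[\,\cdot\,,\cdot\,]\big)$.

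The main obstacle is precisely the closure verified through \eqref{eq:plan_bracket}: a priori the cross terms carry coefficients $z(p_\mu\actl w)$ that need not be central, which would throw $[X,Y]$ outside $\mathcal{Z(A)}\otimes\mathfrak{D}_\kappa$. The whole statement hinges on the fact that derivations stabilize $\mathcal{Z(A)}$, which rescues those coefficients; everything else is bookkeeping with the Leibniz rule and the structure constants of \eqref{eq:kM_kM_Hopf_alg_alg}.
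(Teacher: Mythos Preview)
Your argument is correct and follows essentially the same route as the paper: reduce to the left $\kM$-module algebra setting via duality, verify the Leibniz rule for $z\otimes p_\mu$ using centrality of $z$, and then compute the commutator on simple tensors to obtain exactly the three-term expression the paper writes as $(Z_1^\mu p_\mu(Z_2^\nu)\otimes p_\nu-Z_2^\nu p_\nu(Z_1^\mu)\otimes p_\mu)+Z_1^\mu Z_2^\nu\otimes[p_\mu,p_\nu]$. Your proof is in fact slightly more complete, since you explicitly justify why the cross-term coefficients $z(p_\mu\actl w)$ remain in $\mathcal{Z(A)}$ (derivations stabilize the center), a point the paper's proof asserts without comment.
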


\begin{proof}
    First, observe that the linear structure of $\mathfrak{D}_\kappa$ extends to $\Der_R(\mathcal{A})$. Indeed, write any element of $\mathfrak{D}_\kappa$ as $p=\lambda^\mu p_\mu$ with $\lambda^\mu \in \mathbb{C}$. Upon using the canonical identification $\mathcal{Z(A)} \otimes \mathbb{C} \simeq \mathcal{Z(A)}$, one can write for any $X\in\Der_R(\mathcal{A})$,
    \begin{equation}
        X
        = Z \otimes p
        = Z \otimes \lambda^\mu p_\mu
        = Z^\mu p_\mu
    \end{equation}
    with $Z^\mu = Z \otimes \lambda_\mu \in \mathcal{Z(A)}$.
    
    By using Proposition \ref{prop21}, one computes that for any $z\in\mathcal{A}$, $f, g \in\mathcal{A}$,
    \begin{align}
	   (z \otimes p_\mu) (f g)
	   &= z (p_\mu \actl (f g))
	   = z(p_\mu \actl f)vg + z f (p_\mu \actl g) \label{deca1}\\
	   \big( (z \otimes p_\mu) (f) \big) g + f \big( (z \otimes p_\mu) (g) \big)
	   &= z(p_\mu \actl f) g + f z (p_\mu \actl g)\label{deca2},
    \end{align}
    where \eqref{eq:triv_module_alg_def} and \eqref{eq:derA_def} have been used, so that the RHS of \eqref{deca1} and \eqref{deca2} are equal because $z \in \mathcal{Z(A)}$ by assumption. Therefore, one concludes that any element of $\Der_R(\mathcal{A})$ actually corresponds to a derivation of $\mathcal{A}$. Hence $\Der_R(\mathcal{A}) \subset\Der(\mathcal{A}) $ as claimed in \ref{it:triv_coDk_der}. The injectivity of the map $\Der_R(\mathcal{A}) \to \Der(\mathcal{A})$, $Z^\mu \otimes p_\mu \mapsto (Z^\mu p_\mu) \actl \cdot$ follows the same steps as in Proposition \ref{prop21}.
    
    Next, note that one has the following natural action 
    \begin{equation}
        z \actl (Z^\mu \otimes p_\mu)
	    = (z Z^\mu) \otimes p_\mu,\quad (Z^\mu \otimes p_\mu) \actr z
	    = (Z^\mu z) \otimes p_\mu.\label{zeactiononforms}
    \end{equation}
    for any $z, X^\mu \in \mathcal{Z(A)}$. Thus, the $\mathcal{Z(A)}$-module structure of $\Der_R(\mathcal{A})$ is a straightforward consequence of Proposition \ref{prop21}.
    
    Finally, a standard computation give rise, for any $X = Z_1^\mu p_\mu, Y = Z_2^\nu p_\nu \in\Der_R(\mathcal{A}) $, $Z_1^\mu, Z_2^\nu \in \mathcal{Z(A)}$, to
    \begin{equation}
        [Z_1^\mu \otimes p_\mu, Z_2^\nu \otimes p_\nu]
        = Z_1^\mu p_\mu(Z_2^\nu) \otimes p_\nu - Z_2^\nu p_\nu(Z_1^\mu) \otimes p_\mu + Z_1^\mu Z_2^\nu \otimes [p_\mu, p_\nu]
    \end{equation}
    where the RHS is thus a linear combination of elements of $\Der_R(\mathcal{A}) $ and the last claim follows. 
\end{proof}

Some remarks are in order.
\begin{enumerate}
    \item Note that $\Der_R(\mathcal{A})$ with its action defined in \eqref{eq:derA_def} is a $\mathcal{Z(A)}$-module, but not a $\mathcal{A}$-module. Indeed, if one defines the action $(Z^\mu \otimes p_\mu) (a):= Z^\mu (p_\mu \actl a)$ for any $Z^\mu \in \mathcal{A}$, it implies that 
    \begin{equation}
        (Z_1^\mu \otimes p_\mu) \actl \Big((Z_2^\nu \otimes p_\nu) \actl f \Big) - (Z_2^\nu \otimes p_\nu) \actl \Big((Z_1^\mu \otimes p_\mu) \actl f \Big) \ne [Z_1^\mu \otimes p_\mu, Z_2^\nu \otimes p_\nu] \actl f
    \end{equation}
    for any $f \in \mathcal{A}$ as a consequence of the Leibniz rule. 
    \item The fact that $\Der_R(\mathcal{A}) \subset\Der(\mathcal{A}) $ implies to consider only a limited set of derivations. This gives rise to a so called restricted derivation-based differential calculus, used in particular to construct noncommutative gauge theories on various quantum spaces, see e.g. \cite{wal-wulk1}-\cite{gauge-kappa1}. For a related review, see \cite{physrep}. 
    \item One observes that the linear space $\mathcal{A}\otimes\kM$ can be endowed with the structure of right $\mathcal{T}_\kappa$-comodule algebra when equipped with the smash product and extended coaction. Indeed, the following property holds:
    \begin{proposition}\label{smash-prop1}
        Let us define the smash product $E = \mathcal{A} \# \kM := \mathcal{A} \otimes \kM$ as a vector space, with the product
        \begin{align}
            (f \# p) (g \# q)
            = f (p_{(1)}\actl g) \# (p_{(2)} q)
        \end{align}
        for any $p \in \kM$ and $f, g \in \mathcal{A}$. $E$ equipped with the coaction $\coactr : E \to E \otimes \kM$, $\coactr (f \otimes p) = f \otimes p_{(1)} \otimes p_{(2)}$ is a right $\kM$-comodule algebra.
    \end{proposition}
    \begin{proof}
        This is a direct consequence of Lemma 7.1.2 of ref. \cite{Montgomery_1993} and the fact that $\mathcal{A}$ is a $\kM$-module algebra.
    \end{proof}
    From the last remark, it follows that the $\mathcal{T}_\kappa$-comodule structure of the quantum base space $\mathcal{A}$ transfers naturally to $E$ as a $\kM$-comodule algebra. In view of \eqref{eq:triv_tangent_space} and refs. \cite{Brzezinski_1993}-\cite{Pflaum_1994}, oine could also consider $E = \mathcal{A}\#\kM$ (see Proposition \ref{smash-prop1}) as the quantum version of the trivial tangent bundle related to the quantum base space $\mathcal{A}$.
\end{enumerate}

\subsection{Restricted differential calculus and linear connection}
\label{subsec:triv_diff_calc}
\paragraph{}
We now turn to the characterisation of a restricted noncommutative differential calculus. Several notions of noncommutative differential calculus have already been developed in the literature. We focus here on a convenient one introduced in \cite{Dubois_Violette_1996} that exploits the notion of homomorphisms of module.

Explicitly, given an associative ${}^*$-algebra $\mathcal{A}$, the space of 1-forms $\Omega^1(\mathcal{A})$ can be defined as $\Omega^1(\mathcal{A}) = \Hom_{\mathcal{Z(A)}} \big(\Der(\mathcal{A}), \mathcal{A} \big)$, that is the set of homomorphisms of $\mathcal{Z(A)}$-modules from $\Der(\mathcal{A})$ to $\mathcal{A}$. This generalisation of one-forms is included in the formalism of derivation-based differential calculus, see e.g \cite{Dubois_Violette_1996}. Recall that the latter defines the set of $n$-forms as $\mathcal{Z(A)}$-multilinear antisymmetric maps from $\Der(\mathcal{A})^n$ to $\mathcal{A}$, together with the wedge product \eqref{eq:triv_diff_calc_wedge} and a differential \eqref{eq:triv_diff_calc_diff} defined below. In this section, we built the so-called restricted derivation based differential calculus by merely replacing $\Der(\mathcal{A})$ by $\Der_R(\mathcal{A})$.

\paragraph{}
In the above setting, as $\mathcal{Z(A)} \otimes \mathfrak{D}_\kappa$ plays the role of a quantum version of $\Gamma(\mathcal{M})$, one can define, for $n \in \mathbb{N}$, $\Omega_R^n(\mathcal{A})$ as the set of $\mathcal{Z(A)}$-multilinear antisymmetric maps from $(\mathcal{Z(A)} \otimes \mathfrak{D}_\kappa)^n$ to $\mathcal{A}$, with $\Omega^0(\mathcal{A}) = \mathcal{A}$.

\paragraph{}
Accordingly, the wedge product of forms $\wedge : \Omega^n(\mathcal{A}) \times \Omega^m(\mathcal{A}) \to \Omega^{n+m}(\mathcal{A})$ is then defined through the usual formula 
\begin{equation}
	\begin{aligned}
		(\rho \wedge \eta) & (X_1, \dots, X_{n+m}) = \\
		& \frac{1}{n!m!} \sum_{\sigma \in \mathfrak{S}_{n+m}} (-1)^{\sign(\sigma)} \rho(X_{\sigma(1)}, \dots, X_{\sigma(n)}) \eta(X_{\sigma(n+1)}, \dots, X_{\sigma(n+m)}),
	\end{aligned}
	\label{eq:triv_diff_calc_wedge}
\end{equation}
where $\mathfrak{S}_{n}$ denotes the set of permutations of $n$ elements. Eqn.\ \eqref{eq:triv_diff_calc_wedge} holds for any $X_1, \dots, X_{n+m} \in \Der(\mathcal{A})$ and any $\rho\in\Omega^n(\mathcal{A}) $, $\eta\in\Omega^m(\mathcal{A})$.

\paragraph{}
The differential $\dd : \Omega^{n}(\mathcal{A}) \to \Omega^{n+1}(\mathcal{A})$ is defined through the Koszul formula 
\begin{equation}
	\begin{aligned}
		\dd \rho(X_1, \dots, X_{n+1}) 
		&= 
		\sum_{j=1}^{n+1} (-1)^{j+1} X_j \rho(X_1, \dots, \omitel{j}, \dots, X_{n+1}) \\
		&+ 
		\sum_{1\leqslant j < k\leqslant n+1} (-1)^{j+k} \rho([X_j,X_k], X_1, \dots, \omitel{j}, \dots, \omitel{k}, \dots, X_{n+1}).
	\end{aligned}
	\label{eq:triv_diff_calc_diff}
\end{equation}
where $\omitel{j}$ corresponds to the omission of $X_j$.

\paragraph{}
The above objects can also be defined in the restricted case, that is on $\Omega_R(\mathcal{A})$, with derivations in $\Der_R(\mathcal{A})$. Note that the Koszul formula \eqref{eq:triv_diff_calc_diff} implies that for any $f \in \mathcal{A}$, $\dd f(p_\mu) = p_\mu \actl f$.

The restricticted one forms are thus defined as $\Omega_R^1(\mathcal{A}) = \Hom_{\mathcal{Z(A)}}(\mathfrak{D}_\kappa \otimes \mathcal{Z(A)}, \mathcal{A})$.
In view of $\Der_R(\mathcal{A}) = \mathcal{Z(A)} \otimes \mathfrak{D}_\kappa$, one can check that $\Omega^1_R(\mathcal{A}) \simeq \mathcal{A} \otimes \mathfrak{D}_\kappa^\prime$, which holds true as isomorphisms of $\mathcal{A}$-modules. In this property, the action of $\mathcal{A}$ on $\Hom_{\mathcal{Z(A)}}(\mathfrak{D}_\kappa \otimes \mathcal{Z(A)}, \mathcal{A})$ is defined by the product of $\mathcal{A}$, and the action on   $\mathcal{A} \otimes \mathfrak{D}_\kappa^\prime$ is given by 
\begin{align}
	f \actl (\mathfrak{a}_\mu \otimes \mathfrak{X}^\mu) 
	&=  f \mathfrak{a}_\mu \otimes \mathfrak{X}^\mu &
	(\mathfrak{a}_\mu \otimes \mathfrak{X}^\mu) \actr f
	&= \mathfrak{a}_\mu f \otimes \mathfrak{X}^\mu.
    \label{eq:triv_form_act}
\end{align}
for any $\mathfrak{a}_\mu, f \in\mathcal{A}$. This generates a differential calculus in the sense of \cite{Woronoviec_1989}.

\paragraph{}
Within this framework, a noncommutative version of a linear connection \cite{Dubois_2002} can be defined as a map $\nabla: \Der_R(\mathcal{A})\otimes \Der_R(\mathcal{A})\to\Der_R(\mathcal{A})$ such that for any $X,Y \in \Der_R(\mathcal{A})$ and $z \in \mathcal{Z(A)}$,
\begin{subequations}
\begin{align}
    \nabla_X(z \actl Y) 
	&= z \actl \nabla_X(Y) + X(z) \actl Y, &
	\nabla_X(Y \actr z) 
	&= \nabla_X(Y) \actr z + Y \actr X(z),
    \label{eq:conn2_act_leibniz} \\
	\nabla_{z \actl X}(Y)
	&= z \actl \nabla_X(Y), &
    \nabla_{X \actr z}(Y)
    &= \nabla_X(Y) \actr z,
	\label{eq:conn2_act_lin}
\end{align}
\end{subequations}
where the action is defined by \eqref{zeactiononforms}, which is supplemented by the following hermiticity condition:
\begin{equation}
    (\nabla_X(Y))^* = \nabla_{X^*}(Y^*).
    \label{hermit1}
\end{equation}
From this, using $X=X^\mu p_\mu$ and $Y=Y^\nu p_\nu$, one easily obtains
\begin{align}
    \nabla_{X^\mu p_\mu}(Y^\nu p_\nu)
	= X^\mu Y^\nu\Gamma_{\mu\nu}^\lambda p_\lambda
    + \big(X^\mu (p_\mu \actl Y^\nu) \big) p_\nu,
\label{eq:conn2_comp}
\end{align}
where we have set
\begin{equation}
    \nabla_{p_\mu}(p_\nu)
    := \Gamma_{\mu\nu}^\lambda \, p_\lambda
    \in \mathcal{Z(A)} \otimes \mathfrak{D}_\kappa.
\end{equation}
The theory of module algebra states, in the ${}^*$-algebra case that $(p \actl f)^* = S(p)^\dagger \actl f^*$, where $*$ is the involution in $\mathcal{A}$ and $\dagger$ the one in $\kM$. From the expression \eqref{eq:kM_kM_Hopf_alg_coalg} of the antipode of $\kM$, it implies that $X^*(f) := X(f^*)^*= - (X^\mu)^* (p_\mu \actl f)$, so that the condition \eqref{hermit1} becomes
\begin{equation}
    (\Gamma_{\mu\nu}^\lambda)^* = - \Gamma_{\mu\nu}^\lambda.
    \label{eq:triv_conn_hermit}
\end{equation}
The curvature is defined as a map $ R: \Der_R(\mathcal{A})\times \Der_R(\mathcal{A})\to \Der_R(\mathcal{A}) $ with
\begin{align}
	R(X,Y)(Z) = [\nabla_X, \nabla_Y](Z) - \nabla_{[X,Y]}(Z).
\label{eq:curv2_def}
\end{align}
From \eqref{eq:curv2_def}, one obtains
\begin{align}
	\tensor{R}{_{\mu\nu\lambda}^\tau}
	&= 
	(p_\mu \actl \Gamma_{\nu\lambda}^\tau) 
	- (p_\nu \actl \Gamma_{\mu\lambda}^\tau)
	+ \Gamma_{\nu\lambda}^\sigma \Gamma_{\mu\sigma}^\tau 
	- \Gamma_{\mu\lambda}^\sigma \Gamma_{\nu\sigma}^\tau
	- \tensor{C}{_{\mu\nu}^\sigma} \Gamma_{\sigma\lambda}^\tau,
\label{eq:curv2_comp}
\end{align}
where $	\tensor{C}{_{\mu\nu}^\lambda} = (\delta_\mu^0 \delta_\nu^\lambda - \delta_\nu^0 \delta_\mu^\lambda) \frac{i}{\kappa}$ is the structure constant of $\kM$, which can be read from \eqref{eq:kM_kM_Hopf_alg_alg}.

\section{Quantum analogue of locally trivial tangent bundle}
\label{sec:loc_triv}
\paragraph{}
In section \ref{sec:triv}, we proposed a framework attempting to set-up a noncommutative extension of a trivial tangent bundle. The purpose of this section is to modify this framework in order to set-up a noncommutative analogue of a {\it{locally trivial}} tangent bundle.

The main difficulty to overcome is to deal with the notion of locality which, {\it{stricto sensu}} simply disappears in a noncommutative framework.

\subsection{Covering of algebra with ideals}
\label{subsec:loct_ideal}
\paragraph{}
The notion of locality we want to generalise is strongly linked with the defining notion of open covers for manifolds. Therefore, we adopt here the generalisation to covering of algebra through ideals developed in \cite{Budzynski_1994, Calow_1999, Calow_2000a, Calow_2000b, Calow_2000c}.

\paragraph{}
Let $\mathcal{A}$ be an associative ${}^*$-algebra which corresponds to the ``base-space algebra''. Let $\{J_\alpha\}_\alpha$ be a finite family of $*$-ideal of $\mathcal{A}$.
\begin{definition}
    \label{def:cov_alg}
    $\{J_\alpha\}_\alpha$ is called a covering of $\mathcal{A}$ if
    \begin{align}
        \bigcap_\alpha J_\alpha = \{0\}.
        \label{eq:cov_alg_def}
    \end{align}
\end{definition}
From this, one can define the quotient algebras of $\mathcal{A}$ as $\mathcal{A}_\alpha = \mathcal{A} / J_\alpha$ and $\mathcal{A}_{\alpha\beta} = \mathcal{A} / (J_\alpha + J_\beta)$, where $J_\alpha + J_\beta$ corresponds to the smallest closed $*$-ideal containing $J_\alpha$ and $J_\beta$. By definition any $\mathcal{A}_\alpha$ is an algebra and we can turn it into a ${}^*$-algebra by defining 
\begin{align}
    \pi_\alpha(f)^* = \pi_\alpha(f^*),
    \label{eq:ideal_invol}
\end{align}
for $f \in \mathcal{A}$, where $\pi_\alpha : \mathcal{A} \to \mathcal{A}_\alpha$ be the surjective canonical homomorphism, \textit{i.e.}\ $\pi_\alpha(f) \in \mathcal{A}_\alpha$ is the representative of the equivalent class of $f \in \mathcal{A}$. The set $\{\mathcal{A}_\alpha\}_\alpha$ is called the covering of algebra of $\mathcal{A}$.

\paragraph{}
Observe that this setting consistently encompasses the commutative situation. Indeed, consider $\mathcal{A} = \mathcal{C^\infty(M)}$, where $\mathcal{M}$ is a differential manifold. Then, let $\{U_\alpha\}_\alpha$ be a covering of $\mathcal{M}$ and the map $|_\alpha : \mathcal{C^\infty(M)} \to \mathcal{C}^\infty(U_\alpha)$ be the restriction to $U_\alpha$. The kernel $\Ker(|_\alpha)$ is an ideal of $\mathcal{C^\infty(M)}$ and one has $C^\infty(U_\alpha) = \mathcal{C^\infty(M)} / \Ker(|_\alpha)$. Explicitly, for $f \in \mathcal{C^\infty(M)}$, $\pi_\alpha(f) = f|_\alpha$.

Moreover, $\{\Ker(|_\alpha)\}_\alpha$ is a covering of $\mathcal{C^\infty(M)}$ since it satisfies \eqref{eq:cov_alg_def}, \textit{i.e.}\ the only smooth function that is zero on every $U_\alpha$ is the zero function. Finally, one can show that $\mathcal{C^\infty(M)} / \big(\Ker(|_\alpha) + \Ker(|_\beta)\big) = \mathcal{C}^\infty(U_\alpha \cap U_\beta)$.

\begin{Remark}
    The canonical map $\pi^\alpha_{\tilde{\beta}}$ used in the above proof introduce another way of going from $\mathcal{A}$ to $\mathcal{A}_{\alpha \tilde{\beta}}$ which is not based on $\pi_{\alpha \tilde{\beta}}$. And symmetrically, one could obtain a third way through the canonical map $\pi^{\tilde{\beta}}_\alpha : \mathcal{A}_{\tilde{\beta}} \to \mathcal{A}_{\alpha\tilde{\beta}}$, so that one get the diagram

    \begin{center}
    \begin{tikzpicture}
		\node (Aa)  at ( 0, 1) {$\mathcal{A}_\alpha$};
		\node (Ab)  at ( 0,-1) {$\mathcal{A}_{\tilde{\beta}}$};
		\node (A)   at (-2, 0) {${}_{\phantom{0}} \mathcal{A}$};
		\node (Aab) at ( 2, 0) {$\mathcal{A}_{\alpha\tilde{\beta}}$};
		\draw[thick, ->] (A.north east) -- 
		    node[anchor=south east]{$\pi_\alpha$} (Aa.west);
		\draw[thick, ->] (A.south east) -- 
		    node[anchor= north east]{$\pi_{\tilde{\beta}}$} (Ab.west);
		\draw[thick, ->] (A.east) --
		    node[anchor=south]{$\pi_{\alpha\tilde{\beta}}$} (Aab.west);
		\draw[thick, ->] (Aa.east) -- 
		    node[anchor=south west]{$\pi^\alpha_{\tilde{\beta}}$} (Aab.north west);
        \draw[thick, ->] (Ab.east) -- 
		    node[anchor=north west]{$\pi_\alpha^{\tilde{\beta}}$} (Aab.south west);
	\end{tikzpicture}%
    \end{center}
    and one can show that $\pi^\alpha_{\tilde{\beta}} \circ \pi_\alpha = \pi^{\tilde{\beta}}_\alpha \circ \pi_{\tilde{\beta}} = \pi_{\alpha\tilde{\beta}}$, and $\pi_\alpha^\alpha = \pi_\alpha$.

    Still, when considering a covering of algebra $\{\mathcal{A}_\alpha\}_{\alpha}$, one can construct $\mathcal{A}_{\alpha\beta}$ through $\mathcal{A}_\alpha$ or through $\mathcal{A}_\beta$. One can consider the covering formed by requiring that these two matches. This covering was called the covering completion of $\mathcal{A}$ in \cite{Calow_1999}. In the $C^*$-algebra case, the covering completion and its algebra are isomorphic, but this is not true in general.
\end{Remark}

\subsection{The star-product quantization with local coordinates}
\paragraph{}
Let $\mathcal{M}$ be a pseudo-Riemannian manifold with an atlas $\{U_\alpha, \varphi^\alpha\}_\alpha$. By definition, the coordinate functions $\varphi^\alpha : U_\alpha \to \Mink$ are smooth diffeomorphisms. In a similar way then section \ref{subsec:gc_q}, one can define the pull-back of $\varphi^\alpha$ as $\varphi^\alpha(f) = f \circ \varphi^\alpha$, for any $f \in \mathcal{C}^\infty(\Mink)$, that is
\begin{align} 
    \varphi^\alpha : \mathcal{C}^\infty(\Mink) \to \mathcal{C}^\infty(U_\alpha),
    \label{eq:qst_pb}
\end{align}
and this pull-back is a homomorphism of algebra.

\paragraph{}
Going into the noncommutative generalisation of this picture implies to consider $(\mathcal{A}_\alpha, \star_\alpha)$ to be the quantum version of the local smooth functions $\mathcal{C}^\infty(U_\alpha)$ and similarly $(\kM, \star_\kappa)$ to be the quantum version of $\mathcal{C}^\infty(\Mink)$. Then, we define a coordinate transformation as an invertible homomorphism 
\begin{align}
    \varphi^\alpha : \kM \to \mathcal{A}_\alpha
    \label{eq:qst_qpb}
\end{align}
where \eqref{eq:qst_qpb} is motivated by \eqref{eq:qst_pb}. The fact that $\varphi^\alpha$ is an invertible homomorphism allows one to relate the star-products of $\mathcal{A}_\alpha$ and $\kM$ through
\begin{align}
    f \star_\alpha g
    = \varphi_\alpha^{-1} \big(\varphi^\alpha(f) \star_\kappa \varphi^\alpha(g) \big),
    \label{eq:qst_loc_sp_rel}
\end{align}
for any $f, g \in \mathcal{A}_\alpha$, similarly as in \eqref{eq:qfst_loc_sp_rel}.

\paragraph{}
Some remarks need to be made at this point. First, as expressed in section \ref{subsec:gc_q}, the quantisation scheme \eqref{eq:qst_loc_sp_rel} explicitly depends on $\varphi^\alpha$, that is on the choice of local coordinates. Another coordinate choice $\tilde{\varphi}^\alpha$ would lead to another star-product with the relation
\begin{align}
    f\ \tilde{\star}_\alpha\ g
    = (\varphi_\alpha^{-1} \circ \tilde{\varphi}^\alpha)^{-1} \Big( (\varphi_\alpha^{-1} \circ \tilde{\varphi}^\alpha)(f) \star_\alpha (\varphi_\alpha^{-1} \circ \tilde{\varphi}^\alpha)(g) \Big)
    \label{eq:qst_eq_sp}
\end{align}
for any $f,g \in \mathcal{A}_\alpha$. The two star-product are therefore related by the change of coordinates $\varphi_\alpha^{-1} \circ \tilde{\varphi}^\alpha : \mathcal{A}_\alpha \to \mathcal{A}_\alpha$. Imposing this change of coordinates to be a symmetry of a physical theory is a direct generalisation of the general covariance principle. In that case, the only remaining degrees of freedom in the quantisation of such a theory corresponds to the choice of quantisation of the Minkowski spacetime (here considered to be the so-called $\kappa$-deformation). Lastly, a similar reasoning can be applied to the involution if we consider $\varphi^\alpha$ to be a homomorphism of $*$-algebras. This allows to write
\begin{align}
    f^* = \varphi_\alpha^{-1} \big( \varphi^\alpha(f)^\dagger \big),
    \label{eq:qst_loc_inv}
\end{align}
for any $f \in \mathcal{A}_\alpha$, and recall that ${}^*$ is the involution of $\mathcal{A}_\alpha$, and ${}^\dagger$ the one of $\kappa$-Minkowski. In the coordinate change invariant scenario, the expression of ${}^*$ does not depend on the choice of coordinates.

\subsection{A quantum version of a locally trivial tangent bundle}
\label{subsec:loc_triv_tang_sp}
\paragraph{}
The principal idea of defining a \emph{locally} trivial bundle is to consider that the bundle is trivial on each $\mathcal{A}_\alpha$. 

\paragraph{}
In the same way as in section \ref{subsec:gctt}, we define a coaction $\coactr: \mathcal{A}_\alpha \to \mathcal{A}_\alpha \otimes \mathfrak{D}_\kappa'$
\begin{align}
    \coactr f
    &= \varphi^{-1}_\alpha \big( \mathfrak{X}_\mu \actl_\kappa \varphi^\alpha(f) \big) \otimes \mathfrak{X}^\mu
    \label{eq:lc_kTcoact}
\end{align}
and its corresponding action $\actl : \kM \otimes \mathcal{A}_\alpha \to \mathcal{A}_\alpha$
\begin{align}
    p_\mu \actl f
    &= \varphi^{-1}_\alpha \big( \mathfrak{X}_\mu \actl_\kappa \varphi^\alpha(f) \big)
    = \varphi^{-1}_\alpha \left( \partial^\mu \varphi^\alpha(f) \right)
    \label{eq:lc_kMact}
\end{align}
for any $f \in \mathcal{A}_\alpha$. This implies that $\mathcal{A}_\alpha$ is both a $\mathcal{T}_\kappa$-comodule algebra and a $\kM$-module algebra.

\paragraph{}
Moreover, we mimic Definition \ref{def:derA_def} to introduce the following terminology.
\begin{definition}
    Let $\{\mathcal{A}_\alpha\}_\alpha$ be a covering of algebra of an associative ${}^*$-algebra $\mathcal{A}$. The algebra $\mathcal{A}$ is said to be related to a locally trivial tangent bundle if $\Der_R(\mathcal{A}_\alpha) = \mathcal{Z(A_\alpha)} \otimes \mathfrak{D}_\kappa$ for any $\alpha$.
    \label{def:loct_tan_bun}
\end{definition}
One can show that $\Der_R(\mathcal{A}_\alpha)$ satisfies the Proposition \ref{prop:derA_prop}. By repeating similar arguments as in section \ref{subsec:triv_diff_calc}, one can also build a full differential structure on $\mathcal{A}_\alpha$.

\paragraph{}
The previous definitions are purely local and seem to very much rely on the covering $\{\mathcal{A}_\alpha\}_\alpha$. In order to be complete, we define below some global notions inherited from the local ones and show that they do not depend on the choice of the covering. This is done through the introduction of the noncommutative generalisation of the partition of unity.

\section{The quantum partition of unity}
\label{sec:ncpu}
\subsection{Smooth manifolds and partition of unity}
\label{subsec:part_unit}
\paragraph{}
In this section, we collect the basic features characterising a partition of unity on a smooth manifold $\mathcal{M}$. There exist different definitions of the notion of partition of unity. We find convenient to start from the following one which serves as a guideline to define a noncommutative extension in the next subsection.

\begin{definition}
    \label{def:part_unit}
    A smooth partition of unity is defined by a set of functions $\{\chi_\alpha\}_\alpha$ satisfying the following conditions: 
    \begin{enumerate}[label = \roman*)]
        \item \label{it:pu_el}
        $\chi_\alpha\in\mathcal{C^\infty(M)}$,
        \item \label{it:pu_supp}
        $\{\Supp(\chi_\alpha)\}_\alpha$ is locally finite,
        \item \label{it:pu_pos}
        $\chi_\alpha \geqslant 0$,
        \item \label{it:pu_id}
        $\sum_\alpha \chi_\alpha = \id_{\mathcal{C^\infty(M)}}$.
    \end{enumerate}
    The partition is said to be subordinate to an open cover $\{U_\alpha\}_\alpha$ of $\mathcal{M}$ if for every $\beta$ there exists $\alpha$ such that $\Supp(\chi_\beta) \subset U_\alpha$.
\end{definition}
Recall that a theorem guaranties that given a smooth paracompact manifold $\mathcal{M}$ with an open cover $\{U_\alpha \}_\alpha$, there always exists a partition of unity subordinate to $\{U_\alpha\}_\alpha$.

\paragraph{}
We use this partition of unity to lift the functions on local $U_\alpha$'s to functions on the full $\mathcal{M}$. To picture how one can do so, let us fix one $\alpha$. Then, define the function $\chi_\alpha : \mathcal{C}^\infty(U_\alpha) \to \mathcal{C}^\infty(U_\alpha)$ by $\chi_\alpha(f_\alpha) = \chi_\alpha f_\alpha$,
for any $f_\alpha \in \mathcal{C}^\infty(U_\alpha)$. The latter map can be extended to
\begin{equation} 
    \chi_\alpha : \mathcal{C}^\infty(U_\alpha) \to \mathcal{C^\infty(M)}.
    \label{eq:loct_pu_dom}
\end{equation}
Indeed, as $\chi_\alpha(x) = 0$ for any $x\ \slashed{\in}\ U_\alpha$, we can write for $x \in \mathcal{M}\backslash U_\alpha$, $\chi_\alpha(x)f _\alpha(x) = 0$ even if $f_\alpha$ is not \textit{a priori} defined outside $U_\alpha$. Thus, the partition of unity allows to lift functions from $\mathcal{C}^\infty(U_\alpha)$ to $\mathcal{C^\infty(M)}$ which thus establishes a bridge between local and global quantities.

\paragraph{}
Denoting by $f|_\alpha \in \mathcal{C}^\infty(U_\alpha)$, the restriction of $f \in \mathcal{C^\infty(M)}$ to $U_\alpha$, one can write
\begin{align}
    f = \sum_\alpha \chi_\alpha  f|_\alpha .
    \label{eq:loc2glob}
\end{align}
This relation may be a trivial property of the partition of unity but is at the very basis of what we want to accomplish. Here, in \eqref{eq:loc2glob} the local function $f|_\alpha$ is build from cutting off the global function $f$ through the restriction $|_\alpha$. In the following, we consider similar global objects built from local ones, thanks to the partition of unity $\chi_\alpha$.

\paragraph{}
Other useful properties, needed in the following, are listed below without proof.

\paragraph{}
First, a product of partition of unity is itself a partition of unity. Indeed, consider $\{U_\alpha\}_\alpha$ and $\{\widetilde{U}_{\tilde{\beta}}\}_{\tilde{\beta}}$ be complete sets of charts of $\mathcal{M}$. Let $\{\chi_\alpha\}_\alpha$ and $\{\tilde{\chi}_{\tilde{\beta}}\}_{\tilde{\beta}}$ be partitions of unity subordinate to $\{U_\alpha\}_\alpha$ and $\{\widetilde{U}_{\tilde{\beta}}\}_{\tilde{\beta}}$ respectively. Then, $\{\chi_\alpha \tilde{\chi}_{\tilde{\beta}}\}_{\alpha,\tilde{\beta}}$ is a partition of unity subordinate to $\{U_\alpha \cap \widetilde{U}_{\tilde{\beta}}\}_{\alpha, \tilde{\beta}}$. Here $(\chi_\alpha \tilde{\chi}_{\tilde{\beta}})(x) = \chi_\alpha(x) \tilde{\chi}_{\tilde{\beta}}(x)$ correspond to the point wise product of the partitions.

When considering functionals $\chi$, this statement becomes that $\chi_\alpha \circ \tilde{\chi}_{\tilde{\beta}} = \tilde{\chi}_{\tilde{\beta}} \circ \chi_\alpha$ is a partition of unity subordinate to $\{U_\alpha \cap \widetilde{U}_{\tilde{\beta}}\}_{\alpha, \tilde{\beta}}$.

\paragraph{}
Next, the positivity condition \ref{it:pu_pos}, $\chi_\alpha \geqslant 0$, in Definition \ref{def:part_unit}, can be replaced, for the functional $\chi$, by the statement that if for any $f_\alpha \in \mathcal{C}^\infty(U_\alpha)$, such that $f_\alpha \geqslant 0$, then
\begin{equation}
    \chi_\alpha(f_\alpha) \geqslant 0.
    \label{comm-positiv}
\end{equation}  

\paragraph{}
Finally, recall that, given an atlas $\{(U_\alpha,\varphi^\alpha)\}_\alpha$ of $\mathcal{M}$, the map $\partial_\mu^\alpha : U_\alpha \to TU_\alpha$, which associates to any point of $U_\alpha$ the derivative of the $\mu$-th coordinate given by $\varphi^\alpha$, is a (local) vector field and a (local) frame of $U_\alpha$ in $T\mathcal{M}$. In other words, using local triviality 
\begin{equation}
    TU_\alpha \simeq U_\alpha \times \mathbb{R}^{1,d}, 
    \label{eq:loc_triv}
\end{equation}
$\{\partial_\mu^\alpha\}_\mu$ is a basis vector of $\mathbb{R}^{1,d}$. For any vector field $X \in \Gamma(\mathcal{M})$, one can write\footnote{
Recall that summation convention holds for $\mu$ but not for $\alpha$ here.}  $X|_{U_\alpha} = X^\mu_\alpha \partial_\mu^\alpha$, with $X^\mu_\alpha\in\mathcal{C}^\infty(U_\alpha)$. And, as discussed above, given a partition of unity $\{\chi_\alpha\}_\alpha$ subordinate to $\{U_\alpha\}_\alpha$, one has
\begin{align}
	X 
	= \sum_\alpha X^\mu_\alpha \chi_\alpha \partial^\alpha_\mu
	= \sum_\alpha \chi_\alpha(X^\mu_\alpha \partial^\alpha_\mu),
	\label{eq:commu_der_comp}
\end{align} 
(see for example \cite{Rudolph_2012} Remark 2.3.11.1). In a similar way to \eqref{eq:loc2glob}, this formula links the local vector fields $X^\mu_\alpha \partial^\alpha_\mu$, that express simply in the $\{\partial_\mu^\alpha\}_\mu$ basis, to the global vector field $X$.

\subsection{The noncommutative partition of unity}
\label{subsec:noncopar-comod}
\paragraph{}
We introduce here the definition of a noncommutative partition of unity, first on its own and then adapted to a covering of algebra.

\subsubsection{Definition and properties}
\label{subsubsec:loct_nc_part_unit}
\paragraph{}
Let $\Phi_\mathcal{A}$ be the set of characters of $\mathcal{A}$, that is of all homomorphisms from $\mathcal{A}$ to $\mathbb{C}$. For an element $f \in \mathcal{A}$, we denote $\Supp(f) = \{ \phi \in \Phi_\mathcal{A},\ \phi(f) \neq 0\}$. We now generalise the Definition \ref{def:part_unit} as 
\begin{definition}
    \label{def:nc_part_unit}
    Let $\mathcal{A}$ be a ${}^*$-algebra, a noncommutative partition of unity is a set of elements $\{\chi_\alpha\}_\alpha$ satisfying
    \begin{enumerate}[label = \roman*)]
        \item \label{it:ncpu_el}
        $\chi_\alpha \in \mathcal{A}$.
        \item \label{it:ncpu_supp}
        The set $\{\Supp(\chi_\alpha)\}_\alpha$ is locally finite.
        \item \label{it:ncpu_pos}
        $\chi_\alpha \geqslant 0$, \textit{i.e.} it exists $\zeta_\alpha \in \mathcal{A}$ such that $\chi_\alpha = \zeta_\alpha \zeta_\alpha^*$.
        \item \label{it:ncpu_id}
        For any $f \in \mathcal{A}$, $\sum_\alpha \chi_\alpha \, f = f$.
    \end{enumerate}
\end{definition}

\paragraph{}
It is useful to comment the extension of the conditions \ref{it:pu_el} - \ref{it:pu_id} of Definition \ref{def:part_unit} to their respective noncommutative counterparts  \ref{it:ncpu_el} - \ref{it:ncpu_id} in Definition \ref{def:nc_part_unit}.

\paragraph{}
First, keeping in mind the Gelfand-Naimark theorem together with the Gelfand transform holding for commutative ($C^*$-)algebras, one is naturally led to trade the notion of point by the notion of character of the algebra. Therefore, a reasonable noncommutative generalisation of the notion of support of a smooth function uses the set of characters of the algebra, as defined above.

Then, the local finitude of the support of the partition of unity (corresponding to \ref{it:pu_supp} in Definition \ref{def:part_unit}) can be naturally extended by the requirement \ref{it:ncpu_supp} above. The latter can be also stated as follows: for any $\phi \in \Phi_\mathcal{A}$, there exists a neighbourhood $V$ of $\phi$ such that for any $\tilde{\phi} \in V$, we have $\tilde{\phi}(\chi_\alpha) \neq 0$ for a finite number of index $\alpha$. The previous notion of neighbourhood needs a topology, which can be taken to be the weak* topology.

Next, the extension of the positivity condition is given by \ref{it:ncpu_pos} which reduces to the usual positivity condition whenever $\mathcal{A} = \mathcal{C^\infty(M)}$.

Finally, the sum of all $\chi_\alpha$ must leave the base space invariant, as expressed in \ref{it:pu_id} of Definition \ref{def:part_unit}.

\begin{proposition}
    \label{prop:ncpu_prod}
    Let $\mathcal{A}$ be a ${}^*$-algebra with two partitions of unity $\{\chi_\alpha\}_\alpha$ and $\{\tilde{\chi}_{\tilde{\beta}}\}_{\tilde{\beta}}$. Then, $\{\chi_\alpha \bullet \tilde{\chi}_{\tilde{\beta}}\}_{\alpha, \tilde{\beta}}$ is a partition of unity of $\mathcal{A}$, where the action $\bullet$ is defined as $\chi_\alpha \bullet f = \zeta_\alpha\, f\, \zeta_\alpha^*$, for any $f \in \mathcal{A}$ and with $\zeta_\alpha$ defined in \ref{it:ncpu_pos}.
\end{proposition}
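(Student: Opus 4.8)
The plan is to verify directly that the family $\{\chi_\alpha \bullet \tilde{\chi}_{\tilde{\beta}}\}_{\alpha,\tilde{\beta}}$ meets the four requirements \ref{it:ncpu_el}--\ref{it:ncpu_id} of Definition \ref{def:nc_part_unit}. Writing $\chi_\alpha = \zeta_\alpha \zeta_\alpha^*$ and $\tilde{\chi}_{\tilde{\beta}} = \tilde{\zeta}_{\tilde{\beta}} \tilde{\zeta}_{\tilde{\beta}}^*$ as furnished by condition \ref{it:ncpu_pos}, the definition of $\bullet$ gives
\begin{equation}
    \chi_\alpha \bullet \tilde{\chi}_{\tilde{\beta}}
    = \zeta_\alpha \tilde{\chi}_{\tilde{\beta}} \zeta_\alpha^*
    = \zeta_\alpha \tilde{\zeta}_{\tilde{\beta}} \tilde{\zeta}_{\tilde{\beta}}^* \zeta_\alpha^*
    = (\zeta_\alpha \tilde{\zeta}_{\tilde{\beta}})(\zeta_\alpha \tilde{\zeta}_{\tilde{\beta}})^*.
\end{equation}
Since $\mathcal{A}$ is closed under products and the involution, each such element lies in $\mathcal{A}$, which settles \ref{it:ncpu_el}; and the last rewriting, with $\xi_{\alpha\tilde{\beta}} := \zeta_\alpha \tilde{\zeta}_{\tilde{\beta}}$, exhibits $\chi_\alpha \bullet \tilde{\chi}_{\tilde{\beta}} = \xi_{\alpha\tilde{\beta}} \xi_{\alpha\tilde{\beta}}^*$, which is precisely the positivity \ref{it:ncpu_pos}. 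These two conditions are immediate.

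For the normalisation \ref{it:ncpu_id}, I would exploit the two partition identities in a nested fashion. Fixing $a \in \mathcal{A}$ and summing first over $\tilde{\beta}$ at fixed $\alpha$, the element $\zeta_\alpha^* a$ lies in $\mathcal{A}$, so condition \ref{it:ncpu_id} for $\{\tilde{\chi}_{\tilde{\beta}}\}_{\tilde{\beta}}$ applied to it yields $\sum_{\tilde{\beta}} \tilde{\chi}_{\tilde{\beta}} (\zeta_\alpha^* a) = \zeta_\alpha^* a$. Hence
\begin{equation}
    \sum_{\alpha,\tilde{\beta}} (\chi_\alpha \bullet \tilde{\chi}_{\tilde{\beta}})\, a
    = \sum_\alpha \zeta_\alpha \Big( \sum_{\tilde{\beta}} \tilde{\chi}_{\tilde{\beta}}\, \zeta_\alpha^* a \Big)
    = \sum_\alpha \zeta_\alpha \zeta_\alpha^* a
    = \sum_\alpha \chi_\alpha a
    = a,
\end{equation}
the last equality being condition \ref{it:ncpu_id} for $\{\chi_\alpha\}_\alpha$. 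Note that this uses neither unitality of $\mathcal{A}$ nor any compatibility of $\bullet$ with a choice of square root, only the two given identities. The reorganisation of the double sum into an iterated one is legitimate provided the supports form a locally finite family, which is exactly the remaining point \ref{it:ncpu_supp}.

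The crux of the argument, and the step I expect to be the main obstacle, is the local finiteness \ref{it:ncpu_supp}. The key observation is that for any character $\phi \in \Phi_\mathcal{A}$, multiplicativity of $\phi$ together with commutativity of $\mathbb{C}$ give
\begin{equation}
    \phi(\chi_\alpha \bullet \tilde{\chi}_{\tilde{\beta}})
    = \phi(\zeta_\alpha)\phi(\tilde{\zeta}_{\tilde{\beta}})\phi(\tilde{\zeta}_{\tilde{\beta}}^*)\phi(\zeta_\alpha^*)
    = \phi(\chi_\alpha)\,\phi(\tilde{\chi}_{\tilde{\beta}}),
\end{equation}
so that $\Supp(\chi_\alpha \bullet \tilde{\chi}_{\tilde{\beta}}) = \Supp(\chi_\alpha) \cap \Supp(\tilde{\chi}_{\tilde{\beta}})$; remarkably, this needs only that characters be algebra homomorphisms, not $*$-homomorphisms. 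Given $\phi \in \Phi_\mathcal{A}$, local finiteness of $\{\Supp(\chi_\alpha)\}_\alpha$ and of $\{\Supp(\tilde{\chi}_{\tilde{\beta}})\}_{\tilde{\beta}}$ provides weak${}^*$ neighbourhoods $V_1, V_2$ of $\phi$ meeting only the finitely many supports indexed by some finite sets $F_1, F_2$ respectively. On $V := V_1 \cap V_2$, a support $\Supp(\chi_\alpha) \cap \Supp(\tilde{\chi}_{\tilde{\beta}})$ can be met only when $\alpha \in F_1$ and $\tilde{\beta} \in F_2$, leaving at most $|F_1|\,|F_2|$ nonvanishing index pairs; this establishes \ref{it:ncpu_supp} and, a posteriori, the legitimacy of the sum manipulations above. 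The only delicate point is the bookkeeping with the weak${}^*$ topology and confirming that the intersection of two locally finite families is again locally finite, which the neighbourhood intersection $V_1 \cap V_2$ handles cleanly.
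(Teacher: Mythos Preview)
Your proof is correct and follows essentially the same approach as the paper's: both verify the four conditions of Definition \ref{def:nc_part_unit} directly, using $\chi_\alpha \bullet \tilde{\chi}_{\tilde{\beta}} = (\zeta_\alpha \tilde{\zeta}_{\tilde{\beta}})(\zeta_\alpha \tilde{\zeta}_{\tilde{\beta}})^*$ for positivity, the multiplicativity of characters $\phi(\chi_\alpha \bullet \tilde{\chi}_{\tilde{\beta}}) = \phi(\chi_\alpha)\phi(\tilde{\chi}_{\tilde{\beta}})$ for local finiteness, and a nested application of condition \ref{it:ncpu_id} for the normalisation. Your treatment is in fact slightly more careful than the paper's in two places: you explicitly intersect the two neighbourhoods $V_1 \cap V_2$ for local finiteness, and you apply \ref{it:ncpu_id} to the element $\zeta_\alpha^* a$ rather than tacitly treating $\sum_{\tilde{\beta}} \tilde{\chi}_{\tilde{\beta}}$ as a unit.
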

\begin{proof}
    First, one has $\chi_\alpha \bullet \tilde{\chi}_{\tilde{\beta}} \in \mathcal{A}$ as a product of elements of $\mathcal{A}$.

    Second, let $V$ be the neighbourhood of any character of $\mathcal{A}$. Let $\phi \in V$, then exists finite number of $\alpha$ and $\tilde{\beta}$ such that $\phi(\chi_\alpha) \neq 0$ and $\phi(\tilde{\chi}_{\tilde{\beta}}) \neq 0$. Still, as $\phi(\chi_\alpha \bullet \tilde{\chi}_{\tilde{\beta}}) = \phi(\chi_\alpha) \phi(\tilde{\chi}_{\tilde{\beta}})$, one has that $\phi(\chi_\alpha \bullet \tilde{\chi}_{\tilde{\beta}}) \neq 0$ if and only if $\phi(\chi_\alpha) \neq 0$ and $\phi(\tilde{\chi}_{\tilde{\beta}}) \neq 0$, so that only a finite number of $\alpha$ and $\tilde{\beta}$ satisfies this condition.

    Third, it exists $\zeta_\alpha, \tilde{\zeta}_{\tilde{\beta}} \in \mathcal{A}$ such that $\chi_\alpha = \zeta_\alpha \zeta_\alpha^*$ and $\tilde{\chi}_{\tilde{\beta}} = \tilde{\zeta}_{\tilde{\beta}} \tilde{\zeta}_{\tilde{\beta}}^*$. Thus, one can write $\chi_\alpha \bullet \tilde{\chi}_{\tilde{\beta}} = \zeta_\alpha \tilde{\zeta}_{\tilde{\beta}} (\zeta_\alpha \tilde{\zeta}_{\tilde{\beta}})^*$.

    Finally, for $f \in \mathcal{A}$, one has
    \begin{align*}
        \sum_{\alpha, \tilde{\beta}} (\chi_\alpha \bullet \tilde{\chi}_{\tilde{\beta}}) f
        &= \sum_{\alpha, \tilde{\beta}} \zeta_\alpha \tilde{\zeta}_{\tilde{\beta}} (\zeta_\alpha \tilde{\zeta}_{\tilde{\beta}})^* \, f
        = \sum_{\alpha} \zeta_\alpha \left( \sum_{\tilde{\beta}} \tilde{\zeta}_{\tilde{\beta}} \tilde{\zeta}_{\tilde{\beta}}^* \right) \zeta_\alpha^* \, f
        = \sum_{\alpha} \zeta_\alpha \zeta_\alpha^* \, f
        = f.
    \end{align*}
\end{proof}

\paragraph{}
We now introduce examples of partition of unity on noncommutative spaces. A trivial example in the case of a unital algebra is the set $\{1\}$, which defines a partition of unity. Still, more relevant examples are the matrix algebra and the Moyal space. In the Moyal space, we introduce the so-called matrix basis which makes it close to a matrix algebra. Still, the condition \ref{it:ncpu_supp} is different in the two cases because the matrix basis of Moyal is of infinite dimension.

\begin{Example}\label{ex:matrix}
    Let $\mathcal{A} = M_N(\mathbb{C})$, the algebra of square complex matrices of size $N$. We introduce its canonical basis $\{E_{mn}\}_{m,n \in \{1, \dots, N\}}$, where $E_{mn}$ correspond to the matrix having all zero elements, except the element of line $m$ and column $n$ which is $1$. This basis satisfy $E_{mn} E_{kl} = \delta_{nk} E_{ml}$.

    We introduce the set of elements $\{\chi_{m}\}_{m \in \{1, \dots, N\}}$, with $\chi_m = E_{mm}$. This corresponds to a partition of unity on the algebra of matrices. Indeed, \ref{it:ncpu_el} is satisfied as $E_{mm} \in M_N(\mathbb{C})$. The condition \ref{it:ncpu_supp} is also trivially fulfilled since this partition of unity is finite. One can write $\chi_{m} = E_{mm} E_{mm}$ so that we have \ref{it:ncpu_pos}. Finally, for any $M \in M_N(\mathbb{C})$, it can be decomposed in the canonical base as $M = \sum_{k,l} M_{kl} E_{kl}$, and so
    \begin{equation*}
        \sum_m \chi_m M
        = \sum_{m,k,l} M_{kl} E_{mm} E_{kl}
        = \sum_{m,k,l} M_{kl} \delta_{mk} E_{ml}
        = \sum_{ml} M_{ml} E_{ml}
        = M,
    \end{equation*}
    which corresponds to \ref{it:ncpu_id}.
\end{Example}

\begin{Example}\label{ex:Moyal}
    We consider the Moyal space $\mathcal{A} = \Moyal$ and, for convenience, we introduce the matrix basis $\{f_{mn}\}_{m,n \in \mathbb{N}}$ \cite{graciavar1}. The latter is an orthonormal basis of $\mathcal{S}(\mathbb{R}^{2d})$, the set of Schwartz function of $\mathbb{R}^{2d}$, in the following sense
    \begin{subequations}
    \begin{align}
        f_{mn} \star_\theta f_{kl} &= \delta_{nk} f_{ml}, \qquad
        f_{mn}^\dagger = f_{nm}, 
        \label{eq:moyal_mb_prod} \\
        \langle f_{mn}, f_{kl} \rangle_{L^2(\mathbb{R}^{2d})}
        &= \int \tdl{2d}{x} f_{mn}^\dagger \star_\theta f_{kl} (x)
        = 2\pi \theta \delta_{mk}\delta_{nl}
    \end{align}
    \end{subequations}
    where $\star_\theta$ denotes the star-product of Moyal and ${}^\dagger$ its involution. Then, any element $a \in \Moyal$ and any linear function $\phi : \Moyal \to \mathbb{C}$ can be decomposed as
    \begin{align}
        a &= \sum_{m,n} a_{mn} f_{mn}, &
        \phi &= \sum_{m,n} \phi_{mn} \langle f_{mn}, \cdot \rangle_{L^2(\mathbb{R}^{2d})},
        \label{eq:moyal_mb_decomp}
    \end{align}
    where $a_{mn}, \phi_{mn} \in \mathbb{C}$.
    
    We introduce the set of elements $\{\chi_{m}\}_{m\in\mathbb{N}}$ with $\chi_m = f_{mm}$. The condition \ref{it:ncpu_el} is fulfilled since $f_{mm} \in \Moyal$, and from \eqref{eq:moyal_mb_prod}, one has $\chi_m = f_{mm} f_{mm}$ so that \ref{it:ncpu_pos} is also satisfied. The requirement \ref{it:ncpu_id} is similar to the finite matrices case. Let $a \in \Moyal$, then we decompose it as in \eqref{eq:moyal_mb_decomp}, so that
    \begin{equation*}
        \sum_m \chi_m \star_\theta a
        = \sum_{m,k,l} a_{kl} f_{mm} \star_\theta f_{kl}
        = \sum_{m,k,l} a_{kl} \delta_{mk} f_{ml}
        = \sum_{ml} a_{ml} f_{ml}
        = a.
    \end{equation*}
    Now, let $\phi \in \Phi_{\Moyal}$ and
    \begin{align*} 
        V =
        \left\{ \tilde{\phi} \in \Phi_{\Moyal},\ \underset{a \in \Moyal}{\sup} | \phi(a) - \tilde{\phi}(a) | < 1,\ \exists N\in\mathbb{N},\ \tilde{\phi}_{mn} = 0 \text{ if } m+n > N \right\}.
    \end{align*}
    $V$ corresponds to the set of characters of $\Moyal$ that are inside the unit ring of $\phi$ and such that their decomposition \eqref{eq:moyal_mb_decomp} is zero from a certain rank. Therefore, for $\tilde{\phi} \in V$, one can compute $\tilde{\phi}(\chi_m) = 2\pi \theta \tilde{\phi}_{mm}$, which is zero for $2m > N$. This proves that the condition \ref{it:ncpu_supp} holds.
\end{Example}

\paragraph{}
One could question the fact that we considered the left product by $\chi_\alpha$ in \ref{it:ncpu_id} of Definition \ref{def:nc_part_unit}. It appears that choosing the right multiplication does not change the previous results and would amount to the same thing in the commutative case. Still, if we change \ref{it:ncpu_id} to $\sum_\alpha \chi_\alpha \bullet f = f$, with $\bullet$ defined in Proposition \ref{prop:ncpu_prod}, one can show that there is only one partition of unity (up to positive scalar multiplication) in Examples \ref{ex:matrix} and \ref{ex:Moyal} which is the singlet containing the identity matrix (or correspondingly $\chi = \sum_m f_{mm}$ in the case of Moyal), and so even if $\bullet$ amount to the left/right multiplication in the commutative case.

\subsubsection{The adapted partition of unity}
\label{subsubsec:loct_adap_ncpu}
\paragraph{}
Inspired by the commutative setting, one defines a subordinate partition of unity as
\begin{definition}
    \label{def:partition}
    Let $\{\mathcal{A}_\alpha\}_\alpha$ be a covering of algebra of an associative ${}^*$-algebra $\mathcal{A}$ (as defined in section \ref{subsec:loct_ideal}). A noncommutative partition of unity subordinate to the covering of algebra $\{\mathcal{A}_\alpha\}_\alpha$ of $\mathcal{A}$ is a partition of unity $\{\chi_\beta\}_\beta$ if for every $\beta$ there exists $\alpha$ such that
    \begin{align}
        \Supp(\chi_\beta)
        \subset \Ker(J_\alpha)
        = \big\{ \phi \in \Phi_\mathcal{A}, \ \phi(f) = 0 \ \text{for any } f \in J_\alpha \big\},
        \label{eq:ncpu_sub_def}
    \end{align}
    where $J_\alpha$ is the ideal defining the covering. We further say that the partition $\{\chi_\alpha\}_\alpha$ is adapted to the covering $\{\mathcal{A}_\alpha\}_\alpha$ if $\beta = \alpha$ in \eqref{eq:ncpu_sub_def}.
\end{definition}

In the commutative case, one has that $J_\alpha = \Ker(\vert_\alpha)$ so that $\Ker(J_\alpha) = \{ x \in \mathcal{M},\ f(x) = 0,\ \forall f \in \Ker(J_\alpha) \} = U_\alpha$. Therefore, the subordinate condition perfectly match the one of Definition \ref{def:part_unit}. The adapted condition \eqref{eq:ncpu_sub_def} boils down to $\Supp(\chi_\alpha) \subset U_\alpha$, as $\Ker(J_\alpha) = U_\alpha$ in the commutative setting.

\paragraph{}
As introduced in section \ref{subsec:part_unit}, it is more convenient to work with the functional version of the partition of unity $\chi_\alpha$. In this sense, it can also be defined in the noncommutative case as a functional $\chi_\alpha : \mathcal{A}_\alpha \to \mathcal{A}$, given for any $f \in \mathcal{A}$ by  
\begin{equation}
  \chi_\alpha \circ \pi_\alpha (f) = \chi_\alpha f.
  \label{eq:ncpu_func_def}
\end{equation}

\paragraph{}
The fact that $\chi_\alpha$ starts in the local algebra $\mathcal{A}_\alpha$ and ends in the global one $\mathcal{A}$, namely $\chi_\alpha : \mathcal{A}_\alpha \to \mathcal{A}$, is the noncommutative analogue of \eqref{eq:loct_pu_dom}. As opposed to the linearity of $\chi_\alpha$ which arises as a property (see Proposition \ref{prop:part_unit_comp} below). Finally one should note that for any $f \in \mathcal{A}$,
\begin{align}
    \sum_\alpha \chi_\alpha \circ \pi_\alpha (f)
    = \sum_\alpha \chi_\alpha f
    = f
    \label{eq:ncpu_loc2glob}
\end{align}
which is the noncommutative counterpart of \eqref{eq:loc2glob}.

\paragraph{}
Some properties of such a partition of unity are discussed below.

\paragraph{}
First, one observes that compatibility of the partition of unity with the algebraic structures of $\mathcal{A}$ yields the following property.

\begin{proposition}\label{prop:part_unit_comp}
    For any $f, g \in \mathcal{A}$, for any partition of unity $\{\chi_\alpha\}_\alpha$ adapted to the covering $\{\mathcal{A}_\alpha\}_\alpha$, one has for any $\alpha$
    \begin{enumerate}[label = \roman*)]
        \item \label{it:part_unit_lin}
        $\chi_\alpha$ is linear.
        \item \label{it:part_unit_cent}
        $\chi_\alpha \circ \pi_\alpha \in \mathcal{Z}\big(\Lin(\mathcal{A})\big).$
        \item \label{it:part_unit_prod}
        $\chi_\alpha$ is a morphism of $\mathcal{A}$-modules, where the action of $\mathcal{A}_\alpha$ on $\mathcal{A}$ given by $\pi_\alpha(fg) = \pi_\alpha(f) g$, and the action of $\mathcal{A}$ on itself is given by its product. Explicitly, for any $f, g \in \mathcal{A}$,
        \begin{align}
            \chi_\alpha ( \pi_\alpha(f) ) g
            = \chi_\alpha(\pi_\alpha(f) g).
            \label{eq:part_unit_comp_prod}
        \end{align}
        \item \label{it:part_unit_commu}
        For any other partition of unity $\{\chi_{\tilde{\beta}}\}_{\tilde{\beta}}$ adapted to a covering $\{\mathcal{A}_{\tilde{\beta}}\}_{\tilde{\beta}}$, one has, for any $\alpha, \tilde{\beta}$,
        \begin{align}
            \chi_\alpha \chi_{\tilde{\beta}} 
            = \chi_{\tilde{\beta}} \chi_\alpha 
            = \chi_\alpha \bullet \chi_{\tilde{\beta}} 
            = \chi_{\tilde{\beta}} \bullet \chi_\alpha.
            \label{eq:part_unit_commu}
        \end{align}
    \end{enumerate}
\end{proposition}

\begin{proof}
    First, lets prove \ref{it:part_unit_lin}. For any $\lambda, \mu \in \mathbb{C}$, and $f, g \in \mathcal{A}$, by considering the expression $\lambda f + \mu g$, one can choose to apply \eqref{eq:ncpu_loc2glob} on either $f$ and $g$ or to the element $\lambda f + \mu g$. Thus,
    \begin{align*}
        \lambda f + \mu g
        &= \sum_\alpha \chi_\alpha \circ \pi_\alpha(\lambda f + \mu g)
        = \sum_\alpha \chi_\alpha ( \lambda \pi_\alpha(f) + \mu \pi_\alpha(g) ) \\
        &= \sum_\alpha \lambda \chi_\alpha \circ \pi_\alpha(f) + \mu \chi_\alpha \circ \pi_\alpha(g)
    \end{align*}
    where we used the linearity of the property of the canonical projection $\pi_\alpha(\lambda f + \mu g) =  \lambda \pi_\alpha(f) + \mu \pi_\alpha(g)$. Then, for a given $\alpha_0$, one can set $f, g \in \bigcup \limits_{\alpha \neq \alpha_0} J_{\alpha}$ so that $\sum_\alpha \lambda \chi_\alpha \circ  \pi_\alpha(f) + \mu \chi_\alpha \circ \pi_\alpha(g) = \lambda \chi_{\alpha_0} \circ \pi_{\alpha_0}(f) + \mu \chi_{\alpha_0} \circ \pi_{\alpha_0}(g)$ and similarly $\sum_\alpha \chi_\alpha ( \lambda \pi_\alpha(f) + \mu \pi_\alpha(g) ) = \chi_{\alpha_0} ( \lambda \pi_{\alpha_0}(f) + \mu \pi_{\alpha_0}(g) )$.

    One can also use \eqref{eq:ncpu_loc2glob} to prove \ref{it:part_unit_cent}. Explicitly, let $F \in \Lin(\mathcal{A})$, then for any $f \in \mathcal{A}$,
    \begin{align*}
        F(f)
        &= \sum_\alpha \chi_\alpha \circ \pi_\alpha (F(f)) \\
        &= F\left( \sum_\alpha \chi_\alpha \circ \pi_\alpha (f) \right)
        = \sum_\alpha F \circ \chi_\alpha \circ \pi_\alpha (f)
    \end{align*}
    and given an $\alpha_0$, one can set $f \in \bigcup \limits_{\alpha \neq \alpha_0} J_\alpha$, so that $\chi_{\alpha_0} \circ \pi_{\alpha_0} \circ F = F \circ \chi_{\alpha_0} \circ \pi_{\alpha_0}$.

    Similarly, let us prove \ref{it:part_unit_prod}. By considering the product $ab \in \mathcal{A}$, one can choose to apply \eqref{eq:ncpu_loc2glob} on either $f$ or the full product $fg$, explicitly
    \begin{align*}
        fg
        &= \sum_\alpha \chi_\alpha \circ \pi_\alpha (fg)
        = \sum_\alpha \chi_\alpha( \pi_\alpha(f) g) \\
        &= \sum_\alpha \chi_\alpha \circ \pi_\alpha(f) g
    \end{align*}
    and given an $\alpha_0$, one can set $f \in \bigcup \limits_{\alpha \neq \alpha_0} J_{\alpha}$ so that $\sum_\alpha \chi_\alpha \circ \pi_\alpha(f) g = \chi_{\alpha_0}\pi_{\alpha_0}(f) g$ and $\sum_\alpha \chi_\alpha(\pi_\alpha(f) g) = \chi_{\alpha_0}(\pi_{\alpha_0}(f) g)$.

    Finally, the proof of \ref{it:part_unit_commu} is similar to the one above by simply considering a element $a \in \mathcal{A}$ and using \eqref{eq:ncpu_loc2glob} to decompose it first in $\alpha$ and in $\tilde{\beta}$, or the converse, or both at the same time. Explicitly,
    \begin{align*}
        f
        &= \sum_\alpha \chi_\alpha f
        = \sum_{\alpha, \tilde{\beta}} \chi_{\tilde{\beta}} \chi_\alpha f \\
        &= \sum_{\tilde{\beta}} \chi_{\tilde{\beta}} f
        = \sum_{\alpha, \tilde{\beta}} \chi_\alpha \chi_{\tilde{\beta}} f \\
        &= \sum_{\alpha, \tilde{\beta}} \chi_\alpha \bullet \chi_{\tilde{\beta}} f \\
        &= \sum_{\alpha, \tilde{\beta}} \chi_{\tilde{\beta}} \bullet \chi_\alpha f,
    \end{align*}
    and, as previously, considering $\alpha_0, \tilde{\beta}_0$, one can choose $f \in \bigcup \limits_{\alpha \neq \alpha_0, \tilde{\beta} \neq \tilde{\beta}_0} J_{\alpha} + J_{\tilde{\beta}}$. This proves \eqref{eq:part_unit_commu} for any $\alpha_0, \tilde{\beta}_0$.
\end{proof}

\paragraph{}
Now considering two partitions of unity adapted to two different coverings of algebra, we know from Proposition \ref{prop:ncpu_prod} that their product is a partition of unity. The question whether this product of partition is itself adapted to some covering is answered in the following proposition
\begin{proposition}
    \label{prop:ncpu_prod_adapt}
    Let $\mathcal{A}$ be a ${}^*$-algebra with two partitions of unity $\{\chi_\alpha\}_\alpha$ and $\{\tilde{\chi}_{\tilde{\beta}}\}_{\tilde{\beta}}$, adapted to the coverings of algebra $\{\mathcal{A}_\alpha\}_{\alpha}$ and $\{\mathcal{A}_{\tilde{\beta}}\}_{\tilde{\beta}}$ respectively. Then, the partition of unity $\{\chi_\alpha \bullet \tilde{\chi}_{\tilde{\beta}}\}_{\alpha,\tilde{\beta}}$ is adapted to $\{\mathcal{A}_{\alpha\tilde{\beta}}\}_{\alpha, \tilde{\beta}}$.
\end{proposition}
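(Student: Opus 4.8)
The plan is to lean on Proposition \ref{prop:ncpu_prod}, which already guarantees that $\{\chi_\alpha \bullet \tilde{\chi}_{\tilde{\beta}}\}_{\alpha,\tilde{\beta}}$ is a partition of unity; hence the only thing left to establish is the \emph{adapted} condition of Definition \ref{def:partition} relative to the covering $\{\mathcal{A}_{\alpha\tilde{\beta}}\}_{\alpha,\tilde{\beta}}$. Writing $\mathcal{A}_\alpha = \mathcal{A}/\mathcal{I}_\alpha$ and $\mathcal{A}_{\tilde{\beta}} = \mathcal{A}/\tilde{\mathcal{I}}_{\tilde{\beta}}$ for the ideals underlying the two given coverings, the product covering is $\mathcal{A}_{\alpha\tilde{\beta}} = \mathcal{A}/(\mathcal{I}_\alpha + \tilde{\mathcal{I}}_{\tilde{\beta}})$ with canonical projection $\pi_{\alpha\tilde{\beta}}$. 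Concretely, I must show that for every pair $(\alpha_0,\tilde{\beta}_0)$ and every $(\gamma,\tilde{\delta})\neq(\alpha_0,\tilde{\beta}_0)$, one has $\phi\circ\pi_{\gamma\tilde{\delta}}(\chi_{\alpha_0}\bullet\tilde{\chi}_{\tilde{\beta}_0})=0$ for all $\phi\in\Phi_{\mathcal{A}_{\gamma\tilde{\delta}}}$.

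The key structural input is that the refined projection factors through each coarser quotient: since $\mathcal{I}_\gamma \subseteq \mathcal{I}_\gamma+\tilde{\mathcal{I}}_{\tilde{\delta}}$ and $\tilde{\mathcal{I}}_{\tilde{\delta}}\subseteq\mathcal{I}_\gamma+\tilde{\mathcal{I}}_{\tilde{\delta}}$, the third isomorphism theorem yields induced surjections $p_\gamma:\mathcal{A}_\gamma\to\mathcal{A}_{\gamma\tilde{\delta}}$ and $\tilde{p}_{\tilde{\delta}}:\mathcal{A}_{\tilde{\delta}}\to\mathcal{A}_{\gamma\tilde{\delta}}$ with $\pi_{\gamma\tilde{\delta}}=p_\gamma\circ\pi_\gamma=\tilde{p}_{\tilde{\delta}}\circ\tilde{\pi}_{\tilde{\delta}}$. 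Consequently any $\phi\in\Phi_{\mathcal{A}_{\gamma\tilde{\delta}}}$ pulls back to characters $\phi\circ p_\gamma\in\Phi_{\mathcal{A}_\gamma}$ and $\phi\circ\tilde{p}_{\tilde{\delta}}\in\Phi_{\mathcal{A}_{\tilde{\delta}}}$. Setting $\Psi:=\phi\circ\pi_{\gamma\tilde{\delta}}$, which is an algebra homomorphism $\mathcal{A}\to\mathbb{C}$, and using $\chi_{\alpha_0}\bullet\tilde{\chi}_{\tilde{\beta}_0}=\zeta_{\alpha_0}\,\tilde{\chi}_{\tilde{\beta}_0}\,\zeta_{\alpha_0}^*$ together with multiplicativity, I obtain $\Psi(\chi_{\alpha_0}\bullet\tilde{\chi}_{\tilde{\beta}_0})=\Psi(\zeta_{\alpha_0})\,\Psi(\tilde{\chi}_{\tilde{\beta}_0})\,\Psi(\zeta_{\alpha_0}^*)$.

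Then I split into two cases according to which coordinate of the pair differs. If $\gamma\neq\alpha_0$, adaptedness of $\{\chi_\alpha\}_\alpha$ applied to the character $\phi\circ p_\gamma$ gives $\Psi(\chi_{\alpha_0})=(\phi\circ p_\gamma)\circ\pi_\gamma(\chi_{\alpha_0})=0$; since $\Psi(\chi_{\alpha_0})=\Psi(\zeta_{\alpha_0})\Psi(\zeta_{\alpha_0}^*)$, the product of these two scalars vanishes, and as both appear as factors in $\Psi(\chi_{\alpha_0}\bullet\tilde{\chi}_{\tilde{\beta}_0})$, that quantity is zero. Notably, this needs no $*$-compatibility of $\phi$. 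If instead $\tilde{\delta}\neq\tilde{\beta}_0$, adaptedness of $\{\tilde{\chi}_{\tilde{\beta}}\}_{\tilde{\beta}}$ applied to $\phi\circ\tilde{p}_{\tilde{\delta}}$ makes the middle factor $\Psi(\tilde{\chi}_{\tilde{\beta}_0})=0$, again killing the product. Since $(\gamma,\tilde{\delta})\neq(\alpha_0,\tilde{\beta}_0)$ forces at least one of these two cases, the adapted condition holds and the proof concludes.

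The main obstacle — really the only nontrivial point — is correctly setting up the index bookkeeping and verifying the factorisation $\pi_{\gamma\tilde{\delta}}=p_\gamma\circ\pi_\gamma$ of the canonical projections, which is precisely what allows a character of the refined quotient $\mathcal{A}_{\gamma\tilde{\delta}}$ to be read as a character of each coarser quotient and thus fed into the two separate adaptedness hypotheses. A preliminary point to confirm is that $\{\mathcal{A}_{\alpha\tilde{\beta}}\}_{\alpha,\tilde{\beta}}$ is itself a covering of algebra, i.e.\ $\bigcap_{\alpha,\tilde{\beta}}(\mathcal{I}_\alpha+\tilde{\mathcal{I}}_{\tilde{\delta}})=\{0\}$, so that Definition \ref{def:partition} even applies; in the geometric model $\mathcal{A}=\mathcal{C^\infty(M)}$ this reduces to the familiar statement that the refinement $\{U_\alpha\cap\widetilde{U}_{\tilde{\beta}}\}$ still covers $\mathcal{M}$, and in general it should be checked directly from the intersection condition \eqref{eq:cov_alg_def}.
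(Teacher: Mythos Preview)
Your proof is correct and follows the same approach as the paper: factor $\pi_{\gamma\tilde{\delta}}$ through the coarser quotient (what the paper calls $\pi^{\alpha}_{\tilde{\beta}}$), pull back the character of $\mathcal{A}_{\gamma\tilde{\delta}}$ to a character of $\mathcal{A}_\gamma$ (resp.\ $\mathcal{A}_{\tilde{\delta}}$), and invoke the adaptedness hypothesis on the relevant factor of $\Psi(\chi_{\alpha_0}\bullet\tilde{\chi}_{\tilde{\beta}_0})$. Your explicit case split on which coordinate of the pair differs is in fact more careful than the paper's version, which writes ``$\alpha\neq\alpha_0$ and $\tilde{\beta}\neq\tilde{\beta}_0$'' and then only treats the first factor; your closing remark that $\{\mathcal{A}_{\alpha\tilde{\beta}}\}$ should itself be checked to be a covering is likewise a point the paper leaves implicit.
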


\begin{proof}
    Let us consider $\chi_\alpha \bullet \tilde{\chi}_{\tilde{\beta}}$ for some fixed $\alpha$ and $\tilde{\beta}$, together with $\phi \in \Supp(\chi_\alpha \bullet \tilde{\chi}_{\tilde{\beta}})$. By definition, $\phi(\chi_\alpha \bullet \tilde{\chi}_{\tilde{\beta}}) \neq 0$, which upon using \eqref{eq:part_unit_commu} writes $\phi(\chi_\alpha) \phi(\tilde{\chi}_{\tilde{\beta}}) \neq 0$. Therefore, one has that $\phi \in \Supp(\chi_\alpha) \cap \Supp(\tilde{\chi}_{\tilde{\beta}})$, but as the partitions are subordinate $\Supp(\chi_\alpha) \cap \Supp(\tilde{\chi}_{\tilde{\beta}}) \subset \Ker(J_\alpha) \cap \Ker(\tilde{J}_{\tilde{\beta}}) = \Ker(J_\alpha \cup \tilde{J}_{\tilde{\beta}})$. But, by definition $J_\alpha + \tilde{J}_{\tilde{\beta}}$ contains $J_\alpha$ and $\tilde{J}_{\tilde{\beta}}$, so that $\Ker(J_\alpha \cup \tilde{J}_{\tilde{\beta}}) \subset \Ker(J_\alpha + \tilde{J}_{\tilde{\beta}})$, hence $\phi \in \Ker(J_\alpha + \tilde{J}_{\tilde{\beta}})$.
\end{proof}

\paragraph{}
Finally, note that the commutative version of Proposition \ref{prop:ncpu_prod_adapt} states that given two partitions of unity $\{\chi_\alpha\}_\alpha$ and $\{\tilde{\chi}_{\tilde{\beta}}\}_{\tilde{\beta}}$ adapted to two open covers $\{U_\alpha\}_\alpha$ and $\{\widetilde{U}_{\tilde{\beta}}\}_{\tilde{\beta}}$ respectively, then $\{\chi_\alpha\tilde{\chi}_{\tilde{\beta}}\}_{\alpha,\tilde{\beta}}$ is a partition of unity adapted to the covering $\{U_\alpha \cap \widetilde{U}_{\tilde{\beta}}\}_{\alpha,\tilde{\beta}}$.

\paragraph{}
As a side remark, we point out that the Propositions \ref{prop:ncpu_prod} and \ref{prop:ncpu_prod_adapt} can be exported to the functional formalism of the partition of unity. Indeed, using \eqref{eq:ncpu_func_def} and \eqref{eq:part_unit_commu}, and denoting $ \chi_{\alpha\tilde{\beta}} = \chi_\alpha \tilde{\chi}_{\tilde{\beta}}$, one can show that
\begin{align}
    \chi_{\alpha\tilde{\beta}} \circ \pi_{\alpha\tilde{\beta}}
    = \tilde{\chi}_{\tilde{\beta}} \circ \pi_{\tilde{\beta}} \circ \chi_\alpha \circ \pi_\alpha.
    \label{eq:ncpuf_prod}
\end{align}

\subsection{Global vector fields and forms}
\label{subsec:loc_triv_diff_calc}
\paragraph{}
The focus is now to use the partition of unity to link the (restricted) derivations of $\mathcal{A}_\alpha$ to the ones of $\mathcal{A}$. Indeed, the partition of unity defined in section \ref{subsubsec:loct_adap_ncpu} can either be used to project derivations of $\mathcal{A}$ on a local algebra $\mathcal{A}_\alpha$, or to ``lift derivation'' from $\mathcal{A}_\alpha$ to $\mathcal{A}$. The following proposition holds true.

\begin{proposition}
    \label{prop:der_loc2glob}
    Let $X \in \Der(\mathcal{A})$, then
    \begin{align} 
        X_\alpha
        = \pi_\alpha \circ X \circ \chi_\alpha
        \in \Der(\mathcal{A}_\alpha).
        \label{eq:ncder_loc2glob}
    \end{align}
    Correspondingly, let $X_\alpha \in \Der(\mathcal{A}_\alpha)$, then
    \begin{align}
        X
        = \chi_\alpha \circ X_\alpha \circ \pi_\alpha
        \in \Der(\mathcal{A}).
        \label{eq:ncder_glob2loc}
    \end{align}
\end{proposition}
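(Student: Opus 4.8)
The plan is to treat the projection and the lift separately, but in both the engine is the same pair of facts established earlier: that $\chi_\alpha$ is linear and a right-module morphism (Proposition \ref{prop:part_unit_comp}, items \ref{it:part_unit_lin} and \ref{it:part_unit_prod}), and that $\chi_\alpha \circ \pi_\alpha$ is central in $\Lin(\mathcal{A})$ (item \ref{it:part_unit_cent}), so that in particular it commutes with the linear map $X$. I will also use throughout that $\pi_\alpha$ is a surjective ${}^*$-homomorphism, that $\chi_\alpha \circ \pi_\alpha(a) = \chi_\alpha a$ by \eqref{eq:ncpu_func_def}, and that the adapted partition-of-unity identity \eqref{eq:ncpu_loc2glob}, read inside $\mathcal{A}_\alpha$, forces $\pi_\alpha(\chi_\alpha)$ to be the unit of $\mathcal{A}_\alpha$, whence $\pi_\alpha \circ \chi_\alpha \circ \pi_\alpha = \pi_\alpha$.

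For the projection \eqref{eq:ncder_loc2glob}, I would first note that $X_\alpha = \pi_\alpha \circ X \circ \chi_\alpha$ is linear as a composite of linear maps. The key step is the intertwining identity $X_\alpha \circ \pi_\alpha = \pi_\alpha \circ X$: precomposing the definition with $\pi_\alpha$ gives $X_\alpha \circ \pi_\alpha = \pi_\alpha \circ X \circ (\chi_\alpha \circ \pi_\alpha)$, and then moving $\chi_\alpha \circ \pi_\alpha$ past $X$ by centrality (item \ref{it:part_unit_cent}), followed by $\pi_\alpha \circ \chi_\alpha \circ \pi_\alpha = \pi_\alpha$, yields the claim. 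Since $\pi_\alpha$ is surjective, every element of $\mathcal{A}_\alpha$ has the form $\pi_\alpha(a)$, and the Leibniz rule for $X_\alpha$ transfers from that of $X$: expanding $X_\alpha(\pi_\alpha(a)\pi_\alpha(b)) = X_\alpha(\pi_\alpha(ab)) = \pi_\alpha(X(ab))$ and distributing the homomorphism $\pi_\alpha$ over $X(a)b + aX(b)$ reproduces $X_\alpha(\pi_\alpha(a))\pi_\alpha(b) + \pi_\alpha(a)X_\alpha(\pi_\alpha(b))$.

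For the lift \eqref{eq:ncder_glob2loc}, write $\widetilde{X} = \chi_\alpha \circ X_\alpha \circ \pi_\alpha$; it is linear by item \ref{it:part_unit_lin} together with linearity of the other factors. I would verify the Leibniz rule directly. Using that $\pi_\alpha$ is a homomorphism and $X_\alpha$ a derivation of $\mathcal{A}_\alpha$, one has $X_\alpha(\pi_\alpha(ab)) = X_\alpha(\pi_\alpha(a))\,\pi_\alpha(b) + \pi_\alpha(a)\,X_\alpha(\pi_\alpha(b))$, so applying the linear map $\chi_\alpha$ splits $\widetilde{X}(ab)$ into two terms. The first term is closed by the right-module-morphism property \eqref{eq:part_unit_comp_prod}, which pulls the factor $\pi_\alpha(b)$ out to the right and produces $\widetilde{X}(a)\,b$. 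The second term $\chi_\alpha\big(\pi_\alpha(a)\,X_\alpha(\pi_\alpha(b))\big)$ is the delicate one: here $\chi_\alpha$ must be pushed through a \emph{left} multiplication, for which item \ref{it:part_unit_prod} does not suffice, and I would instead invoke item \ref{it:part_unit_cent}, which forces $\chi_\alpha$ to be central in $\mathcal{A}$ and thereby lets me rewrite this term as $a\,\widetilde{X}(b)$.

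The main obstacle is exactly this asymmetry in the lift: $\chi_\alpha$ is only guaranteed to be a morphism for the right action, so the left-Leibniz term cannot be closed by \eqref{eq:part_unit_comp_prod} alone. The resolution is to extract from the centrality statement \ref{it:part_unit_cent} that $\chi_\alpha$ commutes with every element of $\mathcal{A}$ — obtained by commuting $\chi_\alpha\circ\pi_\alpha$ with left and right multiplication operators — after which both Leibniz terms close symmetrically. A secondary point to check carefully is the identity $\pi_\alpha(\chi_\alpha) = 1_{\mathcal{A}_\alpha}$ used in the projection direction, which rests on evaluating the partition-of-unity sum \eqref{eq:ncpu_loc2glob} inside $\mathcal{A}_\alpha$ together with the adapted condition \eqref{eq:ncpu_sub_def}.
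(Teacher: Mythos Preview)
Your overall strategy matches the paper's: both directions run on centrality of $\chi_\alpha\circ\pi_\alpha$ (item \ref{it:part_unit_cent}) and the module-morphism property (item \ref{it:part_unit_prod}), together with the canonical $\mathcal{A}$-action on $\mathcal{A}_\alpha$. Your treatment of the lift \eqref{eq:ncder_glob2loc} is correct and in fact more explicit than the paper's about why the left-hand Leibniz term closes---the paper passes over that point in one line, whereas you correctly isolate that it needs $\chi_\alpha$ to be central in $\mathcal{A}$, which indeed follows from item \ref{it:part_unit_cent} applied to left-multiplication operators.

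There is, however, a gap in your projection argument. You assert the intertwining identity $X_\alpha\circ\pi_\alpha=\pi_\alpha\circ X$, which you reduce to $\pi_\alpha\circ\chi_\alpha\circ\pi_\alpha=\pi_\alpha$, i.e.\ $\pi_\alpha(\chi_\alpha)=1_{\mathcal{A}_\alpha}$. Your justification (``evaluate the partition-of-unity sum inside $\mathcal{A}_\alpha$ together with the adapted condition'') does not go through: applying $\pi_\alpha$ to $\sum_\beta\chi_\beta=1$ gives $\sum_\beta\pi_\alpha(\chi_\beta)=1_{\mathcal{A}_\alpha}$, but the adapted condition \eqref{eq:ncpu_sub_def} only forces $\phi\big(\pi_\alpha(\chi_\beta)\big)=0$ for $\beta\neq\alpha$ and all characters $\phi$ of $\mathcal{A}_\alpha$. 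In a noncommutative algebra characters need not separate points, so one cannot conclude $\pi_\alpha(\chi_\beta)=0$, and hence not $\pi_\alpha(\chi_\alpha)=1_{\mathcal{A}_\alpha}$.

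The paper sidesteps this entirely: it never simplifies away the $\chi_\alpha$ factor. After commuting $\chi_\alpha\circ\pi_\alpha$ past $X$ it keeps the expression $X_\alpha([a]_\alpha)=\pi_\alpha\circ\chi_\alpha\circ\pi_\alpha\circ X(a)$ as is, expands $X(ab)$ by Leibniz, and then uses \eqref{eq:part_unit_comp_prod} and centrality of $\chi_\alpha$ to pull $a$ and $b$ out on each side; the $\chi_\alpha$ factor then appears symmetrically in both Leibniz terms and is reabsorbed into $X_\alpha$. You can repair your argument in the same spirit: replace the false intertwining identity by $X_\alpha\circ\pi_\alpha=\pi_\alpha(\chi_\alpha)\cdot(\pi_\alpha\circ X)$, observe that $\pi_\alpha(\chi_\alpha)$ is central in $\mathcal{A}_\alpha$ (as the image of a central element under a homomorphism), and note that this central prefactor then factors identically out of both sides of the Leibniz rule.
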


\begin{proof}
    For given $X_\alpha \in \Der(\mathcal{A}_\alpha)$, let $X = \chi_\alpha \circ X_\alpha \circ \pi_\alpha$. One computes for any $f, g \in \mathcal{A}$,
	\begin{align*}
		X (fg) 
		&= \chi_\alpha \circ X_\alpha(\pi_\alpha(f) \pi_\alpha(g)) \\
		&= \chi_\alpha \big( X_\alpha (\pi_\alpha(f)) \pi_\alpha(g) + \pi_\alpha(f) X_\alpha(\pi_\alpha(g)) \big) \\
        &= \chi_\alpha \Big( X_\alpha (\pi_\alpha(f)) g + f X_\alpha(\pi_\alpha(g)) \Big) \\
		&= \chi_\alpha \circ X_\alpha \circ \pi_\alpha (f) g + f \chi_\alpha \circ X_\alpha \circ \pi_\alpha (g) \\
		&= X(f) g + f X(g).
	\end{align*}
    Here, on the third line, we used the fact that for any $f_\alpha \in \mathcal{A}_\alpha$ and for any $g \in \mathcal{A}$, $f_\alpha g = f_\alpha \pi_\alpha(g)$ by definition of the canonical action of $\mathcal{A}$ on $\mathcal{A}_\alpha$.

    Now we turn to \eqref{eq:ncder_glob2loc}. Let $X \in \Der(\mathcal{A})$ and define $X_\alpha = \pi_\alpha \circ X \circ \chi_\alpha$, then for any $f, g \in \mathcal{A}$,
    \begin{align*}
        X_\alpha(\pi_\alpha(f) \pi_\alpha(g))
        &= \pi_\alpha \circ X \circ \chi_\alpha \circ \pi_\alpha (f g) \\
        &= \pi_\alpha \circ \chi_\alpha \circ \pi_\alpha \circ X(f g) \\
        &= \pi_\alpha \circ \chi_\alpha \circ \pi_\alpha \big( X(f) g + f X(g) \big) \\
        &= \pi_\alpha \big( \chi_\alpha \circ \pi_\alpha \circ X(f) g + f \chi_\alpha \circ \pi_\alpha \circ X(g) \big) \\
        &= X_\alpha \circ \pi_\alpha(f) g + f X_\alpha \circ \pi_\alpha(g)
        = X_\alpha \circ \pi_\alpha(f) \pi_\alpha(g) + \pi_\alpha(f) X_\alpha \circ \pi_\alpha(g).
    \end{align*}
    Here, we used \ref{it:part_unit_lin} of Proposition \ref{prop:part_unit_comp} on the second line and \eqref{eq:part_unit_comp_prod} together with the canonical action of $\mathcal{A}$ on $\mathcal{A}_\alpha$ several times. Note that the previous computation is valid since $\pi_\alpha$ is surjective.
\end{proof}

\paragraph{}
From now on, the restricted derivations of $\mathcal{A}$ are defined as
\begin{align}
    \Der_R(\mathcal{A})
    = \left\{ \sum_\alpha \chi_\alpha \circ X_\alpha \circ \pi_\alpha, \; X_\alpha \in \Der_R(\mathcal{A}_\alpha) \right\}.
    \label{eq:loct_fullderR_def}
\end{align}
Note that, as discussed in section \ref{subsec:loc_triv_tang_sp}, this set of derivations may not be the full set of derivations of $\mathcal{A}$. Still, these derivations can be put under the form \eqref{eq:commu_der_comp}, as given by the following
\begin{proposition}
    Any $X \in \Der_R(\mathcal{A})$ can be expressed as 
    \begin{align} 
        X = \sum_\alpha \chi_\alpha(X^\mu_\alpha p^\alpha_\mu),
        \label{eq:loc_triv_der_comp}
    \end{align} 
    where $X_\alpha^\mu \in\mathcal{Z(A_\alpha)}$ and $p^\alpha_\mu = p_\mu \circ \pi_\alpha$, with $p_\mu$ a basis element of $\kM$.
\end{proposition}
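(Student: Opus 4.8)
The plan is to unfold the definition \eqref{eq:loct_fullderR_def} of $\Der_R(\mathcal{A})$ and then feed in the local triviality hypothesis of Definition \ref{def:loct_tan_bun}. By \eqref{eq:loct_fullderR_def}, any $X \in \Der_R(\mathcal{A})$ is of the form $X = \sum_\alpha \chi_\alpha \circ X_\alpha \circ \pi_\alpha$ for some family $X_\alpha \in \Der_R(\mathcal{A}_\alpha)$, where $\chi_\alpha : \mathcal{A}_\alpha \to \mathcal{A}$ is read as the functional \eqref{eq:ncpu_func_def}. Hence the entire content of the statement reduces to rewriting each local piece $X_\alpha \circ \pi_\alpha$ in the component form $X_\alpha^\mu p_\mu^\alpha$.

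First I would invoke Definition \ref{def:loct_tan_bun}, which says precisely that $\Der_R(\mathcal{A}_\alpha) = \mathcal{Z(A_\alpha)} \otimes \mathfrak{D}_\kappa$ for every $\alpha$. Applying to $\mathcal{A}_\alpha$ the decomposition established in the course of the proof of Proposition \ref{prop:derA_prop} --- namely that any element of $\mathcal{Z(A_\alpha)} \otimes \mathfrak{D}_\kappa$ can be written $X_\alpha = X_\alpha^\mu p_\mu$ with $X_\alpha^\mu \in \mathcal{Z(A_\alpha)}$, acting through $X_\alpha([a]_\alpha) = X_\alpha^\mu (p_\mu \actl [a]_\alpha)$ --- each local derivation acquires the required central components $X_\alpha^\mu$.

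Next I would identify $X_\alpha \circ \pi_\alpha$ with $X_\alpha^\mu p_\mu^\alpha$. Indeed, for any $a \in \mathcal{A}$ one computes directly
\begin{align*}
    X_\alpha(\pi_\alpha(a))
    = X_\alpha([a]_\alpha)
    = X_\alpha^\mu (p_\mu \actl [a]_\alpha)
    = X_\alpha^\mu \big( (p_\mu \circ \pi_\alpha)(a) \big)
    = X_\alpha^\mu\, p_\mu^\alpha(a),
\end{align*}
using the definition $p_\mu^\alpha = p_\mu \circ \pi_\alpha$. Since $\pi_\alpha$ is surjective, this equality of maps $\mathcal{A} \to \mathcal{A}_\alpha$ is complete, and its right-hand side manifestly takes values in $\mathcal{A}_\alpha$, so that the functional $\chi_\alpha$ can legitimately be applied to it.

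Substituting this back into \eqref{eq:loct_fullderR_def} then gives $X = \sum_\alpha \chi_\alpha \circ (X_\alpha^\mu p_\mu^\alpha) = \sum_\alpha \chi_\alpha(X_\alpha^\mu p_\mu^\alpha)$, which is \eqref{eq:loc_triv_der_comp}. The argument is essentially a bookkeeping exercise, so I do not expect a genuine obstacle; the only points requiring care are notational, namely keeping the functional reading of $\chi_\alpha$ from \eqref{eq:ncpu_func_def} consistent, checking that $X_\alpha^\mu p_\mu^\alpha$ really lands in $\mathcal{A}_\alpha$ so that $\chi_\alpha$ acts on a bona fide class, and noting that the local finiteness built into the noncommutative partition of unity (Definition \ref{def:nc_part_unit}) guarantees that the sum over $\alpha$ is well defined.
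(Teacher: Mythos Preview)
Your proof is correct and follows essentially the same line as the paper's own argument: unfold \eqref{eq:loct_fullderR_def}, use local triviality $\Der_R(\mathcal{A}_\alpha)=\mathcal{Z(A_\alpha)}\otimes\mathfrak{D}_\kappa$ to write $X_\alpha=X_\alpha^\mu p_\mu$, and then compute $X_\alpha\circ\pi_\alpha(a)=X_\alpha^\mu(p_\mu\actl\pi_\alpha(a))=X_\alpha^\mu p_\mu^\alpha(a)$. Your additional remarks on the surjectivity of $\pi_\alpha$ and the local finiteness are merely sanity checks the paper leaves implicit.
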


\begin{proof}
    Since $\Der_R(\mathcal{A}_\alpha) = \mathcal{Z(A_\alpha)} \otimes \mathfrak{D}_\kappa$, then for $X_\alpha \in \Der_R(\mathcal{A}_\alpha)$, one can find $X_\alpha^\mu \in\mathcal{Z(A_\alpha)}$ such that $X_\alpha = X^\mu_\alpha p_\mu$, with $\{p_\mu\}_\mu$ being the generators of $\kM$. Then, for any $f \in \mathcal{A}$
    \begin{align*}
        X(f)
        &= \sum_\alpha \chi_\alpha \circ X_\alpha \circ \pi_\alpha (f)
        = \sum_\alpha \chi_\alpha \Big( X^\mu_\alpha (p_\mu \actl \pi_\alpha(f)) \Big)
        = \sum_\alpha \chi_\alpha (X^\mu_\alpha p^\alpha_\mu (f)),
    \end{align*}
    where the action $\actl$ of $\kM$ on $\mathcal{A}_\alpha$ is defined in \eqref{eq:modalg_loc_act}.
\end{proof}

\paragraph{}
Using the canonical action of $\mathcal{A}$ on $\mathcal{A}_\alpha$, the property \eqref{eq:part_unit_comp_prod} and the action of $\mathcal{Z(A_\alpha)}$ on $\Der_R(\mathcal{A}_\alpha)$ \eqref{zeactiononforms}, one has for any $ X = \sum_\alpha \chi_\alpha(X^\mu_\alpha p^\alpha_\mu) \in \Der_R(\mathcal{A})$ and $z \in \mathcal{Z(A)}$,
\begin{subequations}
\begin{align}
	z \actl \left( \sum_\alpha \chi_\alpha(X^\mu_\alpha p^\alpha_\mu) \right)
	&= \sum_\alpha z \chi_\alpha(X^\mu_\alpha p^\alpha_\mu)
	= \sum_\alpha \chi_\alpha \big( \pi_\alpha(z) \actl (X^\mu_\alpha p^\alpha_\mu) \big)
	\label{eq:der_act_fullA_left} \\
	\big( \sum_\alpha \chi_\alpha(X^\mu_\alpha p^\alpha_\mu ) \big) \actr z
	&= \sum_\alpha \chi_\alpha(X^\mu_\alpha p^\alpha_\mu) z
	= \sum_\alpha \chi_\alpha \big( (X^\mu_\alpha p^\alpha_\mu) \actr \pi_\alpha(z) \big).
	\label{eq:der_act_fullA_right}
\end{align}
    \label{eq:der_act_fullA}
\end{subequations}
This action is the noncommutative analogue of the point-wise product action of the commutative setting. Therefore, one readily deduces
\begin{proposition}
    $\Der_R(\mathcal{A})$ is a $\mathcal{Z(A)}$-module.   
\end{proposition}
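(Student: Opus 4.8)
The plan is to take the two formulas for the action of $\mathcal{Z(A)}$ already recorded in \eqref{eq:der_act_fullA} as the definition of the module action, and then to verify the two things a $\mathcal{Z(A)}$-module structure requires: that the action is internal, i.e.\ $c\actl X\in\Der_R(\mathcal{A})$ and $X\actr c\in\Der_R(\mathcal{A})$ for all $c\in\mathcal{Z(A)}$ and $X\in\Der_R(\mathcal{A})$, and that the module axioms hold. The genuinely substantive point is the closure, so I would dispatch it first; the axioms then follow almost formally from the commutativity of the central coefficients.

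For closure, I would start from the rightmost expression in \eqref{eq:der_act_fullA_left}, namely $c\actl X=\sum_\alpha\chi_\alpha\big([c]_\alpha\actl(X^\mu_\alpha p^\alpha_\mu)\big)$, and show that each summand is again of the admissible form \eqref{eq:loc_triv_der_comp}. The key observation is that $\pi_\alpha$ is a surjective algebra homomorphism, hence it carries the center into the center: for $c\in\mathcal{Z(A)}$ and any $[b]_\alpha=\pi_\alpha(b)\in\mathcal{A}_\alpha$ one has $[c]_\alpha[b]_\alpha=\pi_\alpha(cb)=\pi_\alpha(bc)=[b]_\alpha[c]_\alpha$, so $[c]_\alpha\in\mathcal{Z(A_\alpha)}$. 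Using the $\mathcal{Z(A_\alpha)}$-action \eqref{zeactiononforms} on $\Der_R(\mathcal{A}_\alpha)=\mathcal{Z(A_\alpha)}\otimes\mathfrak{D}_\kappa$, the summand becomes $[c]_\alpha\actl(X^\mu_\alpha p^\alpha_\mu)=([c]_\alpha X^\mu_\alpha)\,p^\alpha_\mu$, whose coefficient $[c]_\alpha X^\mu_\alpha$ again lies in $\mathcal{Z(A_\alpha)}$ as a product of central elements. Thus $c\actl X=\sum_\alpha\chi_\alpha(Y^\mu_\alpha p^\alpha_\mu)$ with $Y^\mu_\alpha:=[c]_\alpha X^\mu_\alpha\in\mathcal{Z(A_\alpha)}$, which is exactly the shape \eqref{eq:loc_triv_der_comp} characterising elements of $\Der_R(\mathcal{A})$; running the identical argument on \eqref{eq:der_act_fullA_right} handles $X\actr c$.

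Once closure is established, the module axioms are immediate from the middle expressions in \eqref{eq:der_act_fullA}, which evaluate to $(c\actl X)(a)=c\,X(a)$ for all $a\in\mathcal{A}$: distributivity in both slots is the linearity of left multiplication, the compatibility $(c_1c_2)\actl X=c_1\actl(c_2\actl X)$ follows from associativity of the product in $\mathcal{A}$, and $1\actl X=X$ whenever $\mathcal{A}$ is unital. Since $c\actl X$ already belongs to $\Der_R(\mathcal{A})$ by the previous paragraph, no separate check that it is a derivation is needed; the Leibniz property is inherited, the underlying identity being that $ca=ac$ for central $c$.

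The only point requiring any care, and the nearest thing to an obstacle, is the well-definedness of the action independently of the chosen decomposition $X=\sum_\alpha\chi_\alpha(X^\mu_\alpha p^\alpha_\mu)$. This is guaranteed because $c\actl X$, read as the operator $a\mapsto c\,X(a)$ through the middle form of \eqref{eq:der_act_fullA}, depends on $X$ only through its action and not through any particular representative. Combined with the inclusion $\pi_\alpha(\mathcal{Z(A)})\subseteq\mathcal{Z(A_\alpha)}$, this renders the whole verification routine, which is precisely why the statement is deduced ``readily''.
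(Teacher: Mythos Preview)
Your proof is correct and follows essentially the same approach as the paper, which simply cites \eqref{eq:der_act_fullA_left} and \eqref{eq:der_act_fullA_right} without further comment. You have filled in the details the paper leaves implicit, most notably the observation that $\pi_\alpha(\mathcal{Z(A)})\subseteq\mathcal{Z(A_\alpha)}$ needed for closure, and the well-definedness remark; these are exactly the ingredients the paper's one-line proof is relying on.
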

\begin{proof}
    One uses \eqref{eq:der_act_fullA_left} and \eqref{eq:der_act_fullA_right}.
\end{proof}

\paragraph{}
One can build the forms on $\mathcal{A}$ and on $\mathcal{A}_\alpha$ as in section \ref{subsec:triv_diff_calc}. Then, the forms of $\mathcal{A}$ can be projected on forms on $\mathcal{A}_\alpha$ and, correspondingly, forms on $\mathcal{A}_\alpha$ can be lifted to forms on $\mathcal{A}$.

\paragraph{}
Explicitly, let $\Omega^n(\mathcal{A})$ be the set of $\mathcal{Z(A)}$-multilinear antisymmetric maps from $\Der(\mathcal{A})^n$ to $\mathcal{A}$, with $\Omega^0(\mathcal{A})= \mathcal{A}$, and $\Omega^n(\mathcal{A}_\alpha)$ be the set of $\mathcal{Z(A_\alpha)}$-multilinear antisymmetric maps from $\Der(\mathcal{A_\alpha})^n$ to $\mathcal{A_\alpha}$, with $\Omega^0(\mathcal{A_\alpha})= \mathcal{A_\alpha}$. Then, one has

\begin{proposition}
    \label{prop:form_loc2glob}
    Let $\rho \in \Omega^n(\mathcal{A})$, then we define for $X_{1\alpha}, \dots, X_{n\alpha} \in \Der(\mathcal{A}_\alpha)$,
    \begin{align}
        \rho^\alpha (X_{1\alpha}, \dots, X_{n\alpha})
        = \pi_\alpha \circ \rho \Big( \chi_\alpha \circ X_{1\alpha} \circ \pi_\alpha, \dots, \chi_\alpha \circ X_{n\alpha} \circ \pi_\alpha \Big)
        \label{eq:form_glob2loc}
    \end{align}
    and one has $\rho^\alpha \in \Omega^n(\mathcal{A}_\alpha)$.

    Correspondingly, let $\rho^\alpha \in \Omega^n(\mathcal{A}_\alpha)$, defining, for $X_1, \dots, X_n \in \Der(\mathcal{A})$,
    \begin{align}
        \rho (X_1, \dots, X_n)
        = \chi_\alpha \circ \rho^\alpha \Big( \pi_\alpha \circ X_1 \circ \chi_\alpha, \dots, \pi_\alpha \circ X_n \circ \chi_\alpha \Big)
        \label{eq:form_loc2glob}
    \end{align}
    and on has $\rho \in \Omega^n(\mathcal{A})$.
\end{proposition}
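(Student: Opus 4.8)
The plan is to treat the two directions in parallel, since each asserts that a multilinear form is transported along the pair of maps $(\pi_\alpha,\chi_\alpha)$ supplied by Proposition \ref{prop:der_loc2glob}, and in each case one must check three things: that the transported map lands in the correct codomain, that it is antisymmetric, and that it is multilinear over the relevant centre. Codomain and antisymmetry are the routine parts. For \eqref{eq:form_glob2loc}, each slot $\chi_\alpha\circ X_{i\alpha}\circ\pi_\alpha$ is a derivation of $\mathcal{A}$ by \eqref{eq:ncder_glob2loc}, so $\rho$ evaluated on these slots lies in $\mathcal{A}$ and the outer $\pi_\alpha$ sends it into $\mathcal{A}_\alpha$; for \eqref{eq:form_loc2glob}, each slot $\pi_\alpha\circ X_i\circ\chi_\alpha$ is a derivation of $\mathcal{A}_\alpha$ by \eqref{eq:ncder_loc2glob}, $\rho^\alpha$ returns an element of $\mathcal{A}_\alpha$, and the outer $\chi_\alpha$ sends it into $\mathcal{A}$. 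In both cases the slot-wise assignment commutes with any permutation of the arguments, so antisymmetry of $\rho$ (resp. $\rho^\alpha$) passes to the transported form, using linearity of $\pi_\alpha$ and $\chi_\alpha$; and since these assignments are $\mathbb{C}$-linear, the additive part of multilinearity in each slot is equally immediate.

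The substance lies in the behaviour under the central action \eqref{zeactiononforms}. I would settle the lifting direction \eqref{eq:form_loc2glob} first, as it is the clean one. For $c\in\mathcal{Z(A)}$ and $X_1\in\Der(\mathcal{A})$, the definition of the action gives $\pi_\alpha\circ(c\actl X_1)\circ\chi_\alpha=[c]_\alpha\actl(\pi_\alpha\circ X_1\circ\chi_\alpha)$, and $[c]_\alpha\in\mathcal{Z(A_\alpha)}$ because $\pi_\alpha$ is a surjective ${}^*$-homomorphism and hence carries the centre into the centre. One then pulls $[c]_\alpha$ out using the $\mathcal{Z(A_\alpha)}$-multilinearity of $\rho^\alpha$, and finally moves $c$ past $\chi_\alpha$: writing $m_\alpha=\rho^\alpha(\pi_\alpha\circ X_1\circ\chi_\alpha,\dots)$ and $m\in\mathcal{A}$ for any lift of $m_\alpha$, the definition \eqref{eq:ncpu_func_def} yields $\chi_\alpha([c]_\alpha m_\alpha)=\chi_\alpha\,c\,m=c\,\chi_\alpha\,m=c\,\chi_\alpha(m_\alpha)$, where the middle equality is nothing but $c\chi_\alpha=\chi_\alpha c$ for $c\in\mathcal{Z(A)}$. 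This gives $\rho(c\actl X_1,\dots)=c\,\rho(X_1,\dots)$, as required.

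The main obstacle is the analogous step for the projection direction \eqref{eq:form_glob2loc}. Here one is handed $z_\alpha\in\mathcal{Z(A_\alpha)}$, which need not be the image under $\pi_\alpha$ of a central element of $\mathcal{A}$, so the shortcut just used is unavailable. The route I would follow is to exploit the centrality of $z_\alpha$ \emph{inside} $\mathcal{A}_\alpha$ together with the module-morphism property \ref{it:part_unit_prod} of Proposition \ref{prop:part_unit_comp}: choosing any lift $z$ of $z_\alpha$, centrality in $\mathcal{A}_\alpha$ lets one write the action on the right, and \eqref{eq:part_unit_comp_prod} then rewrites $\chi_\alpha\circ(z_\alpha\actl X_{1\alpha})\circ\pi_\alpha$ as the derivation $\chi_\alpha\circ X_{1\alpha}\circ\pi_\alpha$ post-composed with right multiplication by $z$.

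The delicate point, which I expect to be the crux of the whole argument, is that this post-multiplication by a \emph{merely lifted} (hence generally non-central) $z$ must, after the concluding application of $\pi_\alpha$ and in combination with the centrality statement \ref{it:part_unit_cent} of Proposition \ref{prop:part_unit_comp}, reproduce precisely the left multiplication by $z_\alpha$ on $\mathcal{A}_\alpha$. Reconciling the non-centrality of the lift in $\mathcal{A}$ with the required $\mathcal{Z(A_\alpha)}$-linearity of $\rho^\alpha$ is where the full strength of the partition-of-unity axioms is consumed, and it is the one step I would write out in complete detail rather than treat as routine.
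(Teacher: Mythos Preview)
Your overall approach matches the paper's: both reduce the claim to checking centre-linearity slot by slot, both invoke Proposition~\ref{prop:der_loc2glob} for the codomain, and both use the module-morphism property \eqref{eq:part_unit_comp_prod} to pass central factors through $\chi_\alpha$. The paper only writes out the glob-to-loc direction \eqref{eq:form_glob2loc} and declares the other ``similar''; your loc-to-glob argument is precisely the one intended.

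The divergence is at the point you flag as the crux. You rightly note that a general $z_\alpha\in\mathcal{Z}(\mathcal{A}_\alpha)$ need not lift to $\mathcal{Z}(\mathcal{A})$, and you prepare to prove $\mathcal{Z}(\mathcal{A}_\alpha)$-linearity in that generality via \ref{it:part_unit_prod} and \ref{it:part_unit_cent}. The paper does \emph{not} do this: it takes $c\in\mathcal{Z}(\mathcal{A})$ from the outset, writes the action as $X_{j\alpha}\actr[c]_\alpha$, uses \eqref{eq:part_unit_comp_prod} to get $\chi_\alpha\circ(X_{j\alpha}[c]_\alpha)\circ\pi_\alpha=(\chi_\alpha\circ X_{j\alpha}\circ\pi_\alpha)\,c$, and then extracts the \emph{global} central $c$ from $\rho$ by $\mathcal{Z}(\mathcal{A})$-multilinearity before applying $\pi_\alpha$. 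Thus the paper in effect only verifies linearity over $\pi_\alpha\big(\mathcal{Z}(\mathcal{A})\big)\subset\mathcal{Z}(\mathcal{A}_\alpha)$, and the difficulty you single out is simply not confronted there. If you are content with that weaker statement, the computation you have already sketched is complete and coincides with the paper's; if you insist on full $\mathcal{Z}(\mathcal{A}_\alpha)$-linearity, you are going beyond what the paper proves, and the step you correctly identify as delicate remains to be filled in.
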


\begin{proof}
    The proof relies plainly on \eqref{eq:ncpuf_prod} and the fact that $\pi_\alpha$ is a homomorphism. It is standard but we report it for the sake of completeness. The multilinearity is obtained as a composition of linear maps. Only the $\mathcal{Z(A)}$ linearity needs to be checked.

    Let $\rho \in \Omega^n(\mathcal{A})$. We need to prove that $\rho^\alpha$, given by \eqref{eq:form_glob2loc}, is $\mathcal{Z(A_\alpha)}$-multilinear. Let $z \in \mathcal{Z(A)}$, then for any $X_{1\alpha}, \dots, X_{n\alpha} \in \Der(\mathcal{A}_\alpha)$, one has
    \begin{align*}
        \rho^\alpha (X_{1\alpha}, & \dots, X_{j\alpha} \actr \pi_\alpha(z), \dots, X_{n\alpha}) \\
        &= \pi_\alpha \circ \rho \Big( \chi_\alpha \circ X_{1\alpha} \circ \pi_\alpha, \dots, \chi_\alpha \circ (X_{j\alpha} \actr \pi_\alpha(z)) \circ \pi_\alpha,  \dots, \chi_\alpha \circ X_{n\alpha} \circ \pi_\alpha \Big) \\
        &= \pi_\alpha \circ \rho \Big( \chi_\alpha \circ X_{1\alpha} \circ \pi_\alpha, \dots, \chi_\alpha \circ (X_{j\alpha} \pi_\alpha(z)) \circ \pi_\alpha,  \dots, \chi_\alpha \circ X_{n\alpha} \circ \pi_\alpha \Big) \\
        &= \pi_\alpha \circ \rho \Big( \chi_\alpha \circ X_{1\alpha} \circ \pi_\alpha, \dots, (\chi_\alpha \circ X_{j\alpha} \circ \pi_\alpha) z,  \dots, \chi_\alpha \circ X_{n\alpha} \circ \pi_\alpha \Big) \\
        &= \pi_\alpha \left( \rho \Big( \chi_\alpha \circ X_{1\alpha} \circ \pi_\alpha, \dots, \chi_\alpha \circ X_{n\alpha} \circ \pi_\alpha \Big) z \right) \\
        &= \pi_\alpha \circ \rho \Big( \chi_\alpha \circ X_{1\alpha} \circ \pi_\alpha, \dots, \chi_\alpha \circ X_{n\alpha} \circ \pi_\alpha \Big) \pi_\alpha(z) \\
        &= (\rho^\alpha  \actr \pi_\alpha(z)) (X_{1\alpha},\dots, X_{n\alpha})
    \end{align*}
    and similarly for the left action. The proof of $\mathcal{Z(A)}$-multilinearity of $\rho$ in \eqref{eq:form_loc2glob} is similar.
\end{proof}

\paragraph{}
As in section \ref{subsec:triv_diff_calc}, one can define the wedge product through \eqref{eq:triv_diff_calc_wedge} for the forms of $\mathcal{A}$, denoted $\wedge$, and for the forms of $\mathcal{A}_\alpha$, denoted $\wedge_\alpha$. Then one can show that $\wedge_\alpha$ corresponds to the restriction of $\wedge$ to $\mathcal{A}_\alpha$, namely one has
\begin{proposition}
    \label{prop:loct_form_wedge}
    Let $\rho \in \Omega^n(\mathcal{A})$ and $\eta \in \Omega^q(\mathcal{A})$. Define $\rho^\alpha$ and $\eta^\alpha$ through \eqref{eq:form_glob2loc} then one has
    \begin{align}
        \rho^\alpha \wedge_\alpha \eta^\alpha
        = \pi_\alpha (\rho \wedge \eta).
        \label{eq:loct_form_wedge_glob2loc}
    \end{align} 
\end{proposition}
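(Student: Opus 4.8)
The plan is to evaluate both sides of \eqref{eq:loct_form_wedge_glob2loc} on an arbitrary family of local derivations $X_{1\alpha},\dots,X_{(n+q)\alpha}\in\Der(\mathcal{A}_\alpha)$ and to check that they agree as elements of $\mathcal{A}_\alpha$, reading the right-hand side $\pi_\alpha(\rho\wedge\eta)$ as the projection $(\rho\wedge\eta)^\alpha$ of the $(n+q)$-form $\rho\wedge\eta\in\Omega^{n+q}(\mathcal{A})$ obtained through the recipe \eqref{eq:form_glob2loc}. The structural inputs I expect to need are that $\pi_\alpha$ is a linear algebra homomorphism, and that, by Proposition \ref{prop:der_loc2glob}, each lifted map $Y_i:=\chi_\alpha\circ X_{i\alpha}\circ\pi_\alpha$ is a genuine derivation of $\mathcal{A}$, so that $\rho$, $\eta$ and $\rho\wedge\eta$ may legitimately be evaluated on the $Y_i$.

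First I would write out the right-hand side. By \eqref{eq:form_glob2loc} applied to $\rho\wedge\eta$ one has
\begin{equation*}
    (\rho\wedge\eta)^\alpha(X_{1\alpha},\dots,X_{(n+q)\alpha}) = \pi_\alpha\Big[(\rho\wedge\eta)(Y_1,\dots,Y_{n+q})\Big],
\end{equation*}
and I would then expand the inner wedge using the defining formula \eqref{eq:triv_diff_calc_wedge}, obtaining a sum over $\sigma\in\mathfrak{S}_{n+q}$ of terms $(-1)^{\sign(\sigma)}\,\rho(Y_{\sigma(1)},\dots,Y_{\sigma(n)})\,\eta(Y_{\sigma(n+1)},\dots,Y_{\sigma(n+q)})$, all multiplied by $1/(n!q!)$.

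The key step is to move $\pi_\alpha$ inside. Using linearity of $\pi_\alpha$ to pull it through the finite sum, and then the homomorphism property $\pi_\alpha(uv)=\pi_\alpha(u)\pi_\alpha(v)$ to split each product, I arrive at a sum of terms $(-1)^{\sign(\sigma)}\,\pi_\alpha[\rho(Y_{\sigma(1)},\dots,Y_{\sigma(n)})]\,\pi_\alpha[\eta(Y_{\sigma(n+1)},\dots,Y_{\sigma(n+q)})]$. The decisive observation is that the lift commutes with the permutation slot by slot: since $Y_{\sigma(i)}=\chi_\alpha\circ X_{\sigma(i)\alpha}\circ\pi_\alpha$, the definition \eqref{eq:form_glob2loc} identifies $\pi_\alpha[\rho(Y_{\sigma(1)},\dots,Y_{\sigma(n)})]$ with $\rho^\alpha(X_{\sigma(1)\alpha},\dots,X_{\sigma(n)\alpha})$, and likewise for $\eta$. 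Re-assembling the sum with the prefactor $1/(n!q!)$ then reproduces exactly the defining formula \eqref{eq:triv_diff_calc_wedge} for $\wedge_\alpha$ evaluated on the $X_{i\alpha}$, that is $(\rho^\alpha\wedge_\alpha\eta^\alpha)(X_{1\alpha},\dots,X_{(n+q)\alpha})$, which establishes the claim.

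I do not anticipate a genuine obstacle here: the entire content is the compatibility of the projection recipe \eqref{eq:form_glob2loc} with the algebra-homomorphism property of $\pi_\alpha$, and the only point requiring care is the bookkeeping of permutations, namely verifying that the permuted arguments $Y_{\sigma(i)}$ remain the correctly indexed lifted derivations so that the identification with $\rho^\alpha$ and $\eta^\alpha$ is legitimate. Because lifting is applied independently to each argument, this bookkeeping is automatic, and no appeal to the partition-of-unity relation \eqref{eq:ncpuf_prod} is in fact needed for this particular statement.
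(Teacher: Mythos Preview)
Your proposal is correct and follows essentially the same argument as the paper's proof: both unwind the wedge definition \eqref{eq:triv_diff_calc_wedge}, use that $\pi_\alpha$ is a linear algebra homomorphism to split the products, and identify each factor via \eqref{eq:form_glob2loc} with the lifted derivations $Y_i=\chi_\alpha\circ X_{i\alpha}\circ\pi_\alpha$. The only cosmetic difference is that the paper starts from the left-hand side $(\rho^\alpha\wedge_\alpha\eta^\alpha)$ and works toward $\pi_\alpha(\rho\wedge\eta)$, whereas you run the computation in the opposite direction; your explicit remark that Proposition~\ref{prop:der_loc2glob} guarantees the $Y_i$ are genuine derivations, and that \eqref{eq:ncpuf_prod} plays no role here, are accurate observations the paper leaves implicit.
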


\begin{proof}
    First lets prove \eqref{eq:loct_form_wedge_glob2loc}. Let $X_{1\alpha}, \dots, X_{(n+q)\alpha} \in \Der(\mathcal{A}_\alpha)$, one computes
    \begin{align*}
        (\rho^\alpha \wedge_\alpha \eta^\alpha) & (X_{1\alpha}, \dots X_{(n+q)\alpha}) \\
	    &= \frac{1}{n!q!} \sum_{\sigma \in \mathfrak{S}_{n+q}} (-1)^{\sign(\sigma)} \rho^\alpha(X_{\sigma(1)\alpha}, \dots, X_{\sigma(n)\alpha}) \eta^\alpha(X_{\sigma(n+1)\alpha}, \dots, X_{\sigma(n+q)\alpha}) \\
	    &= \frac{1}{n!q!} \sum_{\sigma \in \mathfrak{S}_{n+q}} (-1)^{\sign(\sigma)} \left( \pi_\alpha \circ \rho \Big( X_{\sigma(1)}, \dots, X_{\sigma(n)} \Big) \right) \left( \pi_\alpha \circ \eta \Big( X_{\sigma(n+1)}, \dots, X_{\sigma(n+q)} \Big) \right) \\
	    &= \pi_\alpha \left( \frac{1}{n!q!} \sum_{\sigma \in \mathfrak{S}_{n+q}} (-1)^{\sign(\sigma)} \rho \Big( X_{\sigma(1)}, \dots, X_{\sigma(n)} \Big) \eta \Big( X_{\sigma(n+1)}, \dots, X_{\sigma(n+q)} \Big) \right) \\
        &= \pi_\alpha \left( (\rho \wedge \eta) \Big( X_{\sigma(1)}, \dots, X_{\sigma(n+q)} \Big) \right)
    \end{align*}
    where we denoted $X_j = \chi_\alpha \circ X_{j\alpha} \circ \pi_\alpha$ and we used that $\pi_\alpha$ is a homomorphism.
\end{proof}

\paragraph{}
In this context, the differential can be defined through the Koszul formula \eqref{eq:triv_diff_calc_diff} on $\mathcal{A}$, denoted by $\dd$, and on $\mathcal{A}_\alpha$, denoted $\dd_\alpha$. These two should match on $\mathcal{A}_\alpha$ since the differential is local. In the commutative case, it translates to the fact that the restriction of $\dd$ to $\mathcal{A}_\alpha$ equals $\dd_\alpha$ (see for example \cite{Rudolph_2012} Remark 4.1.5).

\paragraph{}
However, it turns out that we did not define a notion of restriction for derivations and forms yet. Indeed, $X_\alpha$ or $\rho^\alpha$ defined above do not correspond to the restriction of $X$ and $\rho$ (respectively) to $\mathcal{A}_\alpha$ since their expression involves $\chi_\alpha$. To define such a restriction, one rather use a {\it{right inverse}} of $\pi_\alpha$, denoted ${}^R\pi_\alpha^{-1}$, which exists because $\pi_\alpha$ is surjective.

Thus, we define for any function on $\mathcal{A}$, say $X$, its restriction to $\mathcal{A}_\alpha$, by
\begin{equation}
    X|_\alpha = \pi_\alpha \circ X \circ {}^R\pi_\alpha^{-1}.    
\end{equation}
The restricted form is then defined as \eqref{eq:form_glob2loc}, replacing $\chi_\alpha$ by ${}^R\pi_\alpha^{-1}$.
\begin{proposition}
    Let $\rho \in \Omega^n(\mathcal{A})$, then one has
    \begin{align}
        (\dd \rho)|_\alpha = \dd_\alpha \rho|_\alpha
        \label{eq:loct_diff_loc}
    \end{align}
\end{proposition}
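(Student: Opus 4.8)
The plan is to evaluate both sides of \eqref{eq:loct_diff_loc} on an arbitrary family of local derivations $Y_1,\dots,Y_{n+1}\in\Der(\mathcal{A}_\alpha)$ and to reduce everything to the single Koszul formula \eqref{eq:triv_diff_calc_diff}, applied once on $\mathcal{A}$ (for $\dd\rho$) and once on $\mathcal{A}_\alpha$ (for $\dd_\alpha\rho|_\alpha$). Throughout I abbreviate the lift of a local derivation by $\widetilde{Y}:={}^R\pi_\alpha^{-1}\circ Y\circ\pi_\alpha$, so that by the form-restriction definition one has $\rho|_\alpha(Y_1,\dots,Y_n)=\pi_\alpha\circ\rho(\widetilde{Y_1},\dots,\widetilde{Y_n})$ and, applied to the $(n+1)$-form $\dd\rho$, $(\dd\rho)|_\alpha(Y_1,\dots,Y_{n+1})=\pi_\alpha\big((\dd\rho)(\widetilde{Y_1},\dots,\widetilde{Y_{n+1}})\big)$.

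First I would record the two structural identities on which the whole argument rests, both immediate consequences of the defining relation $\pi_\alpha\circ{}^R\pi_\alpha^{-1}=\id_{\mathcal{A}_\alpha}$. \emph{(i)} For every local $Y$ one has $\pi_\alpha\circ\widetilde{Y}=Y\circ\pi_\alpha$, since the outer $\pi_\alpha$ cancels ${}^R\pi_\alpha^{-1}$. \emph{(ii)} The lift is a \emph{strict} Lie-algebra morphism, $[\widetilde{Y_j},\widetilde{Y_k}]=\widetilde{[Y_j,Y_k]_\alpha}$: indeed $\widetilde{Y_j}\circ\widetilde{Y_k}={}^R\pi_\alpha^{-1}\circ Y_j\circ(\pi_\alpha\circ{}^R\pi_\alpha^{-1})\circ Y_k\circ\pi_\alpha={}^R\pi_\alpha^{-1}\circ Y_jY_k\circ\pi_\alpha$, and subtracting the same expression with $j,k$ swapped collapses the inner factor and leaves exactly ${}^R\pi_\alpha^{-1}\circ[Y_j,Y_k]_\alpha\circ\pi_\alpha$.

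Next I would expand $(\dd\rho)(\widetilde{Y_1},\dots,\widetilde{Y_{n+1}})$ through \eqref{eq:triv_diff_calc_diff} and apply $\pi_\alpha$, which is an algebra homomorphism and hence passes through the sums, signs and products. In the order-one sum each term $\pi_\alpha\big(\widetilde{Y_j}\,\rho(\widetilde{Y_1},\dots,\omitel{j},\dots,\widetilde{Y_{n+1}})\big)$ becomes, using identity \emph{(i)} and the definition of $\rho|_\alpha$, exactly $Y_j\big(\rho|_\alpha(Y_1,\dots,\omitel{j},\dots,Y_{n+1})\big)$. In the bracket sum, identity \emph{(ii)} lets me replace $[\widetilde{Y_j},\widetilde{Y_k}]$ by $\widetilde{[Y_j,Y_k]_\alpha}$, after which the definition of $\rho|_\alpha$ turns each term into $\rho|_\alpha([Y_j,Y_k]_\alpha,Y_1,\dots,\omitel{j},\dots,\omitel{k},\dots,Y_{n+1})$. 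Reassembling, one recovers verbatim the Koszul formula for $\dd_\alpha\rho|_\alpha(Y_1,\dots,Y_{n+1})$, which is \eqref{eq:loct_diff_loc}.

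The only genuinely delicate point is the bracket sum, and it is precisely where the use of ${}^R\pi_\alpha^{-1}$ rather than $\chi_\alpha$ is essential: for a generic lift one would obtain only $\pi_\alpha\circ[\widetilde{Y_j},\widetilde{Y_k}]=[Y_j,Y_k]_\alpha\circ\pi_\alpha$, i.e.\ agreement \emph{after} projection, which does not by itself force $\rho$ evaluated on $[\widetilde{Y_j},\widetilde{Y_k}]$ and on $\widetilde{[Y_j,Y_k]_\alpha}$ to coincide \emph{before} projection. The right-inverse property upgrades this to the strict identity \emph{(ii)} and removes the ambiguity, which is exactly why the restriction is defined through a right inverse of $\pi_\alpha$. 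I would finally note, as a well-definedness prerequisite inherited from the analogues of Propositions \ref{prop:der_loc2glob} and \ref{prop:form_loc2glob} with $\chi_\alpha$ replaced by ${}^R\pi_\alpha^{-1}$, that each $\widetilde{Y_j}$ indeed lies in $\Der(\mathcal{A})$ and that $\rho|_\alpha\in\Omega^n(\mathcal{A}_\alpha)$, so that both Koszul expansions are legitimate.
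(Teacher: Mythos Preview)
Your proof is correct and follows essentially the same route as the paper: lift local derivations via ${}^R\pi_\alpha^{-1}$, expand $(\dd\rho)|_\alpha$ through the Koszul formula, and use $\pi_\alpha\circ{}^R\pi_\alpha^{-1}=\id_{\mathcal{A}_\alpha}$ together with the bracket identity $[\widetilde{Y_j},\widetilde{Y_k}]=\widetilde{[Y_j,Y_k]_\alpha}$ to recover $\dd_\alpha\rho|_\alpha$. Your exposition is in fact more explicit than the paper's, which simply records the computation and notes the bracket identity at the end; your remark on why ${}^R\pi_\alpha^{-1}$ rather than $\chi_\alpha$ is required is a welcome clarification.
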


\begin{proof}
    Let $X_{1\alpha}, \dots, X_{(n+1)\alpha} \in \Der(\mathcal{A}_\alpha)$, one has
    \begin{align*}
        (\dd \rho)|_\alpha & (X_{1\alpha}, \dots, X_{(n+1)\alpha}) \\
        &= \pi_\alpha \Bigg( \sum_{j=1}^{n+1} (-1)^{j+1} {}^R\pi_\alpha^{-1} \circ X_{j\alpha} \circ \pi_\alpha \circ \rho(X_{1}, \dots, \omitel{j}, \dots, X_{n+1}) \\
		&+ 
		\sum_{1\leqslant j < k\leqslant n+1} (-1)^{j+k} \rho \Big([X_{j},X_{k}], X_{1}, \dots, \omitel{j}, \dots, \omitel{k}, \dots, X_{n+1} \Big) \Bigg). \\
        &= \sum_{j=1}^{n+1} (-1)^{j+1} X_{j\alpha} \circ \pi_\alpha \circ \rho(X_{1}, \dots, \omitel{j}, \dots, X_{n+1}) \\
		&+ 
		\sum_{1\leqslant j < k\leqslant n+1} (-1)^{j+k} \pi_\alpha \circ \rho \Big([X_{j},X_{k}], X_{1}, \dots, \omitel{j}, \dots, \omitel{k}, \dots, X_{n+1} \Big). \\
        &= \dd_\alpha \rho|_\alpha (X_{1\alpha}, \dots, X_{(n+1)\alpha})
    \end{align*}
    where we set $X_j = {}^R\pi_\alpha^{-1} \circ X_{j\alpha} \circ \pi_\alpha$. Note that this proof uses 
    \begin{equation}
        [{}^R\pi_\alpha^{-1} \circ X_{j\alpha} \circ \pi_\alpha, {}^R\pi_\alpha^{-1} \circ X_{k\alpha} \circ \pi_\alpha] = {}^R\pi_\alpha^{-1} \circ [X_{j\alpha},X_{k\alpha}] \circ \pi_\alpha.
    \end{equation}
\end{proof}

\paragraph{}
Note that one can also formulate Proposition \ref{prop:der_loc2glob} and \ref{prop:form_loc2glob} by replacing $\chi_\alpha$ by ${}^R\pi_\alpha^{-1}$.

\paragraph{}
Now, we turn to the locally trivial setting we introduced. From section \ref{subsec:triv_diff_calc} and Definition \ref{def:loct_tan_bun}, as each local algebra $\mathcal{A}_\alpha$ is trivial, one has $\Omega^1_R(\mathcal{A}_\alpha) = \mathcal{A}_\alpha \otimes \mathfrak{D}_\kappa'$. Moreover, in view of \eqref{eq:loct_fullderR_def} and \eqref{eq:form_loc2glob}, one can define 
\begin{align}
    \Omega_R^1(\mathcal{A}) 
    = \left\{ \sum_\alpha \chi_\alpha \circ \rho^\alpha ( \pi_\alpha \circ \cdot \circ \chi_\alpha), \ \rho^\alpha \in \Omega_R^1(\mathcal{A}_\alpha) \right\}.
    \label{eq:loct_fullformR_def}
\end{align}
Note that $\pi_\alpha \circ \cdot \circ \chi_\alpha : \Der(\mathcal{A}) \to \Der(\mathcal{A}_\alpha)$ must be added for the same reasons as in \eqref{eq:form_loc2glob}, that is that $\rho$ takes argument in $\Der(\mathcal{A})$ and $\rho^\alpha$ takes argument in $\Der(\mathcal{A}_\alpha)$.

\begin{proposition}
    \label{prop:loct_form_loc2glob}
    Any $\rho \in \Omega^1_R(\mathcal{A})$ can be expressed as
    \begin{align}
        \rho = \sum_\alpha \chi_\alpha (\rho^\alpha_\mu \mathfrak{X}^\mu_\alpha)
        \label{eq:loct_fullformR_loc2glob}
    \end{align}
    where $\rho^\alpha_\mu \in \mathcal{A}_\alpha$ and $\mathfrak{X}^\mu_\alpha = \mathfrak{X}^\mu(\pi_\alpha \circ \cdot \circ \chi_\alpha)$, with $\mathfrak{X}^\mu$ a basis element of $\mathcal{T}_\kappa$.
\end{proposition}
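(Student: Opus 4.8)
The plan is to peel off the two layers of definitions and then match the two expressions by evaluating on an arbitrary restricted derivation. First I would invoke the very definition \eqref{eq:loct_fullformR_def}: any $\rho\in\Omega^1_R(\mathcal{A})$ is by construction of the form $\rho=\sum_\alpha\chi_\alpha\circ\rho^\alpha(\pi_\alpha\circ\cdot\circ\chi_\alpha)$ for some collection $\rho^\alpha\in\Omega^1_R(\mathcal{A}_\alpha)$. Since each $\mathcal{A}_\alpha$ is locally trivial in the sense of Definition \ref{def:loct_tan_bun}, the discussion of Section \ref{subsec:triv_diff_calc} applies verbatim to $\mathcal{A}_\alpha$ and yields the identification $\Omega^1_R(\mathcal{A}_\alpha)=\mathcal{A}_\alpha\otimes\mathfrak{D}_\kappa^\prime$. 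Hence each local form decomposes as $\rho^\alpha=\rho^\alpha_\mu\otimes\mathfrak{X}^\mu$ with $\rho^\alpha_\mu\in\mathcal{A}_\alpha$ and $\{\mathfrak{X}^\mu\}$ the generators of $\mathcal{T}_\kappa$ dual to $\{p_\mu\}$.

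Next I would evaluate on an arbitrary $X\in\Der_R(\mathcal{A})$. By Proposition \ref{prop:der_loc2glob} the composite $X_\alpha:=\pi_\alpha\circ X\circ\chi_\alpha$ lies in $\Der(\mathcal{A}_\alpha)$, and in the restricted setting it reads $X_\alpha=X^\nu_\alpha p_\nu$ with $X^\nu_\alpha\in\mathcal{Z(A_\alpha)}$, exactly as in the proof of \eqref{eq:loc_triv_der_comp}. Using the $\mathcal{Z(A_\alpha)}$-linearity of $\rho^\alpha$ together with the dual pairing $\langle\mathfrak{X}^\mu,p_\nu\rangle=\delta^\mu_\nu$, one computes $\rho^\alpha(X_\alpha)=X^\nu_\alpha\langle\mathfrak{X}^\mu,p_\nu\rangle\rho^\alpha_\mu=\rho^\alpha_\mu X^\mu_\alpha$. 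On the other hand, unravelling the notation $\mathfrak{X}^\mu_\alpha=\mathfrak{X}^\mu(\pi_\alpha\circ\cdot\circ\chi_\alpha)$ gives $\mathfrak{X}^\mu_\alpha(X)=\mathfrak{X}^\mu(X_\alpha)=X^\mu_\alpha$, so that $(\rho^\alpha_\mu\mathfrak{X}^\mu_\alpha)(X)=\rho^\alpha_\mu X^\mu_\alpha$ coincides with $\rho^\alpha(X_\alpha)$.

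Finally I would close the argument by applying $\chi_\alpha$ and summing: $\rho(X)=\sum_\alpha\chi_\alpha\big(\rho^\alpha(X_\alpha)\big)=\sum_\alpha\chi_\alpha\big((\rho^\alpha_\mu\mathfrak{X}^\mu_\alpha)(X)\big)$, which holds for every $X\in\Der_R(\mathcal{A})$ and is precisely \eqref{eq:loct_fullformR_loc2glob}. The linearity of $\chi_\alpha$ established in Proposition \ref{prop:part_unit_comp} is what allows one to identify $\chi_\alpha\circ\rho^\alpha(\pi_\alpha\circ\cdot\circ\chi_\alpha)$ with the functional $\chi_\alpha(\rho^\alpha_\mu\mathfrak{X}^\mu_\alpha)$ term by term inside the sum.

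The computation is essentially the $1$-form transcription of the derivation decomposition \eqref{eq:loc_triv_der_comp}, so no genuinely new difficulty arises. The only point demanding care — and the step I would watch most closely — is the bookkeeping of the pairing in the non-finite-dimensional setting: the symbols $\mathfrak{X}^\mu$ must be read as the basis of the finite-dimensional $\mathfrak{D}_\kappa^\prime$ dual to $\{p_\mu\}$, and not of all of $\mathcal{T}_\kappa$, so that $\langle\mathfrak{X}^\mu,p_\nu\rangle=\delta^\mu_\nu$ is legitimate and the sum over $\mu$ stays finite. This restriction to $\mathfrak{D}_\kappa$ and its dual is precisely what makes the identification $\Omega^1_R(\mathcal{A}_\alpha)=\mathcal{A}_\alpha\otimes\mathfrak{D}_\kappa^\prime$ well defined, and it is the hinge on which the whole decomposition turns.
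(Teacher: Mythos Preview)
Your proposal is correct and follows essentially the same route as the paper: invoke the definition \eqref{eq:loct_fullformR_def}, decompose each local $\rho^\alpha$ using $\Omega^1_R(\mathcal{A}_\alpha)=\mathcal{A}_\alpha\otimes\mathfrak{D}_\kappa^\prime$, and then evaluate on an arbitrary $X$ to identify $\rho^\alpha(\pi_\alpha\circ X\circ\chi_\alpha)$ with $\rho^\alpha_\mu\mathfrak{X}^\mu_\alpha(X)$. The paper's proof does this in a single line of substitutions, whereas you unpack the dual pairing $\langle\mathfrak{X}^\mu,p_\nu\rangle=\delta^\mu_\nu$ explicitly and track the $\mathcal{Z(A_\alpha)}$-linearity; this extra care is fine, and your closing remark about restricting to the finite-dimensional $\mathfrak{D}_\kappa^\prime$ is a useful caveat that the paper leaves implicit.
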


\begin{proof}
    Since $\Omega_R^1(\mathcal{A}_\alpha) = \mathcal{A}_\alpha \otimes \mathfrak{D}_\kappa'$, then for any $\rho^\alpha \in \Omega_R^1(\mathcal{A}_\alpha)$, one can find $\rho^\alpha_\mu \in \mathcal{A}_\alpha$ such that $\rho^\alpha = \rho^\alpha_\mu \mathfrak{X}^\mu$, with $\{ \mathfrak{X}^\mu \}_\mu$ being the generators of $\mathcal{T}_\kappa$. Then, for any $X \in \Der(\mathcal{A})$,
    \begin{align*}
        \rho(X)
        = \sum_\alpha \chi_\alpha \circ \rho^\alpha ( \pi_\alpha \circ X \circ \chi_\alpha)
        = \sum_\alpha \chi_\alpha \Big( \rho^\alpha_\mu \mathfrak{X}^\mu( \pi_\alpha \circ X \circ \chi_\alpha) \Big)
        = \sum_\alpha \chi_\alpha ( \rho^\alpha_\mu \mathfrak{X}^\mu_\alpha(X) )
    \end{align*}
\end{proof}

\paragraph{}
For consistency we can check that $\Omega^1_R(\mathcal{A})$, defined as such, is a $\mathcal{A}$-module by the extension of the $\mathcal{A}_\alpha$ action \eqref{eq:triv_form_act} on $\Omega^1_R(\mathcal{A}_\alpha)$. Explicitly, for any $\rho = \sum_\alpha \chi_\alpha(\rho^\alpha_\mu \mathfrak{X}^\mu_\alpha) \in \Omega_R^1(\mathcal{A})$ and $f \in \mathcal{A}$,
\begin{subequations}
\begin{align}
	f \actl \big( \sum_\alpha \chi_\alpha(\rho^\alpha_\mu \mathfrak{X}^\mu_\alpha) \big)
	&= \sum_\alpha f \chi_\alpha(\rho^\alpha_\mu \mathfrak{X}^\mu_\alpha)
	= \sum_\alpha \chi_\alpha \big( f \actl (\rho^\alpha_\mu \mathfrak{X}^\mu_\alpha) \big),
	\label{eq:form_act_fullA_left}\\
	\big( \sum_\alpha \chi_\alpha(\rho^\alpha_\mu \mathfrak{X}^\mu_\alpha) \big) \actr f
	&= \sum_\alpha \chi_\alpha(\rho^\alpha_\mu f \mathfrak{X}^\mu_\alpha)
	= \sum_\alpha \chi_\alpha \big( (\rho^\alpha_\mu \mathfrak{X}^\mu_\alpha) \actr f \big).
	\label{eq:form_act_fullA_right}
\end{align}
\label{eq:form_act_fullA}
\end{subequations}

As in the previous section, this action is the noncommutative analogue of the point-wise product action of the commutative setting. Therefore,
\begin{proposition}
    $\Omega_R^1(\mathcal{A})$ is a $\mathcal{A}$-module.   
\end{proposition}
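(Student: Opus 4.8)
The plan is to establish that $\Omega_R^1(\mathcal{A})$ is stable under the left and right actions of $\mathcal{A}$ and that these actions obey the module axioms; the whole difficulty sits in the stability (closure), while the axioms will follow termwise from the $\mathcal{A}_\alpha$-module structure of $\Omega_R^1(\mathcal{A}_\alpha) = \mathcal{A}_\alpha \otimes \mathfrak{D}_\kappa$ given by \eqref{eq:triv_form_act}.

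For closure I would start from an arbitrary $\rho \in \Omega_R^1(\mathcal{A})$ written in the canonical form $\rho = \sum_\alpha \chi_\alpha(\rho^\alpha_\mu \mathfrak{X}^\mu_\alpha)$ provided by Proposition \ref{prop:loct_form_loc2glob}, with $\rho^\alpha_\mu \in \mathcal{A}_\alpha$. The identities \eqref{eq:form_act_fullA_left} and \eqref{eq:form_act_fullA_right} are then exactly what is needed: they rewrite $a \actl \rho$ and $\rho \actr a$ in the form $\sum_\alpha \chi_\alpha\big( a \actl (\rho^\alpha_\mu \mathfrak{X}^\mu_\alpha) \big)$ and $\sum_\alpha \chi_\alpha\big( (\rho^\alpha_\mu \mathfrak{X}^\mu_\alpha) \actr a \big)$ respectively. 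Since $\Omega_R^1(\mathcal{A}_\alpha)$ is an $\mathcal{A}_\alpha$-module, the bracketed local expressions, interpreted via the $\mathcal{A}_\alpha$-action through $[a]_\alpha$, again lie in $\Omega_R^1(\mathcal{A}_\alpha)$, so $a \actl \rho$ and $\rho \actr a$ are once more of the defining shape \eqref{eq:loct_fullformR_def}, proving closure. The engine behind \eqref{eq:form_act_fullA} is the fact that $\chi_\alpha$ is a module morphism, \eqref{eq:part_unit_comp_prod}, together with the canonical action $[b]_\alpha a = [ba]_\alpha$ of $\mathcal{A}$ on $\mathcal{A}_\alpha$, which is what lets the global factor $a$ be absorbed into the local argument.

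It then remains to check the module axioms. Bilinearity in $a$ and in $\rho$ is immediate from the linearity of $\chi_\alpha$ (Proposition \ref{prop:part_unit_comp}\ref{it:part_unit_lin}) and the distributivity of the displays \eqref{eq:form_act_fullA}. The compatibility $(ab) \actl \rho = a \actl (b \actl \rho)$, and its right counterpart, I would verify by applying \eqref{eq:form_act_fullA} twice and using that $\pi_\alpha$ is an algebra homomorphism, so that $[ab]_\alpha = [a]_\alpha [b]_\alpha$ reduces the statement to associativity of the $\mathcal{A}_\alpha$-action on $\Omega_R^1(\mathcal{A}_\alpha)$. The main obstacle, and really the only nontrivial point, is the closure step: a priori, multiplying a global form by $a \in \mathcal{A}$ could destroy its local decomposition, and the partition-of-unity compatibility \eqref{eq:part_unit_comp_prod} is precisely the property that prevents this. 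Once it is invoked, everything downstream is routine, exactly as for the $\mathcal{Z(A)}$-module structure of $\Der_R(\mathcal{A})$ established via \eqref{eq:der_act_fullA_left} and \eqref{eq:der_act_fullA_right}.
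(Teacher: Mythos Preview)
Your proposal is correct and follows the same approach as the paper: the paper's proof consists of the single line ``One uses \eqref{eq:form_act_fullA_left} and \eqref{eq:form_act_fullA_right}'', which is exactly the core of your argument. You have simply spelled out more of the routine verifications (closure, bilinearity, associativity via $\pi_\alpha$ being a homomorphism) that the paper leaves implicit.
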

\begin{proof}
    One uses \eqref{eq:form_act_fullA_left} and \eqref{eq:form_act_fullA_right}.
\end{proof}

\subsection{Global quantisation and integration}
\label{subsec:loc2glob_quant}
\paragraph{}
The role of the partition of unity to link locally defined and globally defined objects does not only apply to tensorial objects. In this section, we use the partition of unity to link the local quantisation procedure we defined in \eqref{eq:qst_loc_sp_rel} to a global one (see \eqref{eq:qst_glob_sp}). Furthermore, we also want to build a notion of integral on the full $\mathcal{A}$ that can be fully determined by an integral on $\kM$. This is done in two step: first we define an integral on $\mathcal{A}_\alpha$ from an integral on $\kM$ through the coordinate charts, and then we glue these integrals thanks to a partition of unity, similarly as in the commutative case. Beyond these definitions, we prove that they do not depend on both the choice of the covering of algebra and second on the choice of the partition of unity.

In this subsection only, we explicitly denote the product of $\mathcal{A}$ by $\star$ and the one of $\mathcal{A}_\alpha$ by $\star_\alpha$. These notations are here to underline the link between the local ($\star_\alpha$) and global ($\star$) quantisation.

\paragraph{}
If one pays a close attention to the expression \ref{it:ncpu_id} of Definition \ref{def:nc_part_unit}, written under the form \eqref{eq:ncpu_loc2glob}, together with the fact that $\pi_\alpha$ is by definition a homomorphism (\textit{i.e.}\ $\pi_\alpha( f \star g) = \pi_\alpha(f) \star_\alpha \pi_\alpha(g)$, then the following holds
\begin{align}
    f \star g
    = \sum_\alpha \chi_\alpha \big( \pi_\alpha(f) \star_\alpha \pi_\alpha(g) \big),
    \label{eq:qst_glob_sp}
\end{align}
for any $f,g \in \mathcal{A}$. It implies that the quantisation of the full noncommutative spacetime is actually directly inherited from the quantisation of the local noncommutative spacetimes and so inherited from $\kM$.

\paragraph{}
It is important to underline that the expression \eqref{eq:qst_glob_sp} does not depend on the choice of the covering of algebra or the choice of the partition of unity. Indeed, if one consider another covering $\{(\mathcal{A}_{\tilde{\beta}}, \star_{\tilde{\beta}})\}_{\tilde{\beta}}$, with an adapted partition of unity $\{\tilde{\chi}_{\tilde{\beta}}\}_{\tilde{\beta}}$, then
\begin{align*}
    f \star g
    &= \sum_\alpha \chi_\alpha \big( \pi_\alpha(f) \star_\alpha \pi_\alpha(g) \big)
    = \sum_{\alpha, \tilde{\beta}} \tilde{\chi}_{\tilde{\beta}} \circ \pi_{\tilde{\beta}} \circ \chi_\alpha \circ \pi_\alpha \big( f \star g \big) \\
    &= \sum_{\alpha, \tilde{\beta}} \chi_{\alpha\tilde{\beta}} \circ \pi_{\alpha\tilde{\beta}} ( f \star g )
    = \sum_{\tilde{\beta}} \tilde{\chi}_{\tilde{\beta}} \big( \pi_{\tilde{\beta}}(f) \star_{\tilde{\beta}} \pi_{\tilde{\beta}}(g) \big).
\end{align*}
This computation only uses the relation \eqref{eq:ncpuf_prod}, with $\chi_{\alpha\tilde{\beta}} = \chi_\alpha \star \chi_{\tilde{\beta}}$.

\paragraph{}
The involution of the spacetime algebra $\mathcal{A}$ can also be derived from the local involutions in a similar fashion. Explicitly, one has
\begin{align}
    f^* = \sum_\alpha \chi_\alpha \big( \pi_\alpha(f)^* \big)
    \label{eq:qst_glob_inv}
\end{align}
for any $f \in \mathcal{A}$. This expression is one again independent of the choice of covering and partition of unity.

\paragraph{}
We here define the notion of integration on a local algebra having coordinate charts sending to $\kappa$-Minkowski, as defined in \eqref{eq:qst_qpb}. Similarly to the commutative setting, one considers, for any $\omega_\alpha \in \Omega^{d+1}(\mathcal{A}_\alpha)$,
\begin{align}
    \int_{\mathcal{A}_\alpha} \omega_\alpha
    = \int_{\varphi^\alpha(\mathcal{A}_\alpha) \subset \kM} \varphi^\alpha( \omega ).
\end{align}
The integration domain is now a subset of $\kM$, therefore the integral on the algebra is fully determined by the integral on $\kappa$-Minkowski.

Correspondingly, the integration on the global algebra is defined through the partition of unity, for any $\omega \in \Omega^{d+1}(\mathcal{A})$,
\begin{align}
\begin{aligned}
    \int_{\mathcal{A}} \omega
    &= \sum_\alpha \int_{\mathcal{A}_\alpha} \pi_\alpha( \chi_\alpha \star \omega)
    = \sum_\alpha \int_{\varphi^\alpha(\mathcal{A}_\alpha)} \varphi^\alpha \circ \pi_\alpha(\chi_\alpha \star \omega) \\
    &= \sum_\alpha \int_{\varphi^\alpha(\mathcal{A}_\alpha)} \varphi^\alpha \circ \pi_\alpha(\chi_\alpha) \star_\kappa  \varphi^\alpha \circ \pi_\alpha(\omega).
\end{aligned}
    \label{eq:qst_gint_def}
\end{align}
One should note that the previous definitions do not depend on the choice of the covering of algebra or of the partition of unity.

\begin{proof}
    Consider two covering of algebras $\{\mathcal{A}_\alpha\}_\alpha$ and $\{\mathcal{A}_{\tilde{\beta}}\}_{\tilde{\beta}}$ with adapted partitions of unity $\{\chi_\alpha\}_\alpha$ and $\{\tilde{\chi}_{\tilde{\beta}}\}_{\tilde{\beta}}$ respectively. One can split the whole integral over $\mathcal{A}$ as integrals over the $\mathcal{A}_\alpha$'s through \eqref{eq:qst_gint_def}, and then split each integral over $\mathcal{A}_\alpha$ as integrals over $\mathcal{A}_{\alpha \tilde{\beta}}$ using again \eqref{eq:qst_gint_def}. It writes
    \begin{align}
    \begin{aligned}
        \int_\mathcal{A} \omega
        &= \sum_\alpha \int_{\mathcal{A}_\alpha} \pi_\alpha(\chi_\alpha \star \omega)
        = \sum_{\alpha, \tilde{\beta}} \int_{\mathcal{A}_{\alpha\tilde{\beta}}} \pi^\alpha_{\tilde{\beta}} \big( \tilde{\chi}_{\tilde{\beta}} \star_\alpha \pi_\alpha( \chi_\alpha \star \omega) \big) \\
        &= \sum_{\alpha, \tilde{\beta}} \int_{\mathcal{A}_{\alpha\tilde{\beta}}} \pi^\alpha_{\tilde{\beta}} \circ \pi_\alpha( \tilde{\chi}_{\tilde{\beta}} \star \chi_\alpha \star \omega)
        = \sum_{\alpha, \tilde{\beta}} \int_{\mathcal{A}_{\alpha\tilde{\beta}}} \pi_{\alpha \tilde{\beta}} ( \chi_{\alpha \tilde{\beta}} \star \omega)
    \end{aligned}
    \end{align}
    and by reverting the roles of $\alpha$ and $\tilde{\beta}$, one obtains
    \begin{align}
        \int_\mathcal{A} \omega
        = \sum_\alpha \int_{\mathcal{A}_\alpha} \pi_\alpha ( \chi_\alpha \star \omega)
        = \sum_{\tilde{\beta}} \int_{\mathcal{A}_{\tilde{\beta}}} \pi_{\tilde{\beta}} ( \chi_{\tilde{\beta}} \star \omega ).
    \end{align}
\end{proof}

\section{Conclusion}
\label{sec:conc}
\paragraph{}
In this paper, we constructed a noncommutative version of a locally trivial tangent bundle together with a noncommutative partition of unity. The starting idea is to request that the (noncommutative analogue of) local tangent space is a $\kappa$-Minkowski spacetime, which thus mimics the commutative situation. Here, the notion of a ``local'' algebra is built as a quotient of the total algebra by an ideal, in order to mimic the commutative case. Furthermore, if one consider that the local algebra correspond to a star-product quantisation of a local open set of the full spacetime manifold, then one can obtain a natural star-product from the star-product quantisation of the Minkowski space (here taken to be $\kappa$-Minkowski). This is done through the generalisation of local coordinate charts to the noncommutative case. From this construction, one can straightforwardly build a (co)module algebra structure on the local algebra, which is at the very basis of how $\kappa$-Minkowski ends up being the noncommutative local tangent space.

The noncommutative partition of unity, defined in this paper, was developed to complete this scheme. Explicitly, it allows to express global derivations and forms as sums of their local ``components''. As examples, we restricted the space of local derivations to the linear span of the four generators of $\kappa$-Minkowski and defined the global derivations as the gluing of the local ones. This restricted global derivations are shown to be a $\mathcal{Z(A)}$-submodule of $\Der(\mathcal{A})$. Furthermore, the restricted local one-forms are built as the dual space of the restricted local derivations, in the algebraic sense. Once again, the restricted global one-forms are defined as the gluing of the local ones through the partition of unity and are shown to be a $\mathcal{A}$-submodule of the global one-forms. Finally, we showed that the global star-product and a coherent definition of integral on the global quantum spacetime are obtained from their local counterparts with the use of the partition, but that these global objects does not depend the choice of either the partition or the covering of algebra.

\paragraph{}
The above discussion points toward several improvements or generalisations. The following list is far from being exhaustive but is thought to be relevant for physics matters. 

\paragraph{}
First, the discussion in this paper is purely algebraic in nature. The next interesting step would be to provide an action generalising the Einstein-Hilbert action. We will come back to some the notion of action in this context in a forthcoming publication \cite{tbp24}. However, such an action can only be formulated thanks to a topological setting, which is absent \textit{a priori} here. To do so $\mathcal{A}$ and the $\mathcal{A}_\alpha$'s must be turned into topological algebras. A topology may imply relations to make consistent topologies of the global algebra and the local ones. A way of including a consistent topology would be to use sheaf theory.

Indeed, gauge theory on principal fiber bundles has a quantum formulation within sheaf theory \cite{Pflaum_1994} and \cite{Aschieri_2021b, Aschieri_2021c}. The basic object of such formulations is a quantum ringed space involving a topological space (for us it would be $\mathcal{A}$) and a sheaf of noncommutative algebra over it (the stalks of the sheaf would be our $\mathcal{A}_\alpha$'s). The structure group of the quantum principal fiber bundle is a Hopf algebra $\mathcal{H}$ and the principal bundle is a sheaf of $\mathcal{H}$-comodule algebras for which all invariant stalks correspond to the base stalks. 

In our construction, we require that every stalks $\mathcal{A}_\alpha$ are $\mathcal{T}_\kappa$-comodule algebra. Then a straightforward generalisation of our formalism in terms of sheaf theory would be to impose $\mathcal{H=T_\kappa}$. Yet, in that case, the full structure of the bundle remain to be determined, like the invariant elements of the coaction.

Finally, the notion of a partition of unity is well-defined in fine sheaves (see for example \cite{Warner_1983} section 5.10). As it is a cornerstone of our construction, it may give our setting more mathematical coherence.

\appendix


\section{The \tops{$\kappa$}{kappa}-Minkowski space}
\label{ap:kM}
\paragraph{}
In this section, we briefly present the $\kappa$-Minkowski space \cite{Lukierski_2017}. This quantum space correspond to the deformation of the Minkowski space and is thought to feature some quantum gravity effects. For a recent review on quantum gravity phenomenology see \cite{Addazi_2022}.

\paragraph{}
It turns out that the $\kappa$-Minkowski space is rigidly linked to a deformation of the Poincar\'{e} algebra \cite{Lukierski_1991}, called the $\kappa$-Poincar\'e algebra, which can be viewed as the quantum (\textit{i.e.}\ noncommutative) analogue to its algebra of symmetries.

The $\kappa$-Poincar\'{e} algebra $\mathcal{P}_\kappa$ may be defined as the bicrossproduct \cite{MR1994} $\mathcal{P}_\kappa = \mathcal{T}_\kappa \bicros U\mathfrak{so}(1,d)$, where $\mathcal{T}_\kappa$ denotes the set of deformed translations, generated by $\{P_\mu\}_{\mu \in \{0, \dots, d\}}$, and $U\mathfrak{so}(1,d)$ is the universal enveloping algebra of the rotations and boosts, generated by $\{M_j\}_{j \in \{1, \dots, d\}}$ and $\{N_j\}_{j \in \{1, \dots, d\}}$. We consider the so-called Majid-Ruegg basis \cite{MR1994}, thus trading $P_0$ for $\mathcal{E} = e^{-\frac{P_0}{\kappa}}$. The Hopf algebra structure of $\mathcal{P}_\kappa$ is as follows
\begin{subequations}
\begin{align}
		[M_j,M_k] &= i \tensor{\epsilon}{_{jk}^l} M_l, & 
		[M_j,N_k] &= i\tensor{\epsilon}{_{jk}^l} N_l, & 
		[N_j,N_k] &= -i\tensor{\epsilon}{_{jk}^l} M_l, \\
		[M_j,P_k] &= i\tensor{\epsilon}{_{jk}^l} P_l, &
		[P_j,\mathcal{E}] &= [M_j,\mathcal{E}]=0, &
		[P_j, P_k] &= 0,
		\label{eq:kM_kP_Hopf_alg_alg_tran}
\end{align}%
    \vspace{\dimexpr-\abovedisplayskip-\baselineskip+\jot}%
\begin{align}
    [N_j,\mathcal{E}] &= -\dfrac{i}{\kappa}P_j\mathcal{E}, &
    [N_j,P_k] = - \frac{i}{2} \delta_{jk} \left( \kappa(1-\mathcal{E}^2) + \frac{1}{\kappa} P_l P^l \right) + \frac{i}{\kappa} P_j P_k,
\end{align}%
\begin{align}
    \Delta P_0 &= P_0 \otimes 1 + 1 \otimes P_0, &
	\Delta P_j &= P_j \otimes 1 + \mathcal{E} \otimes P_j, 
	\label{eq:kM_kP_Hopf_alg_coalg_tran} \\
	\Delta \mathcal{E} &= \mathcal{E} \otimes \mathcal{E}, &
	\Delta M_j &= M_j \otimes 1 + 1 \otimes M_j,
\end{align}%
    \vspace{\dimexpr-\abovedisplayskip-\baselineskip+\jot}%
\begin{align}
    \Delta N_j = N_j\otimes 1 + \mathcal{E}\otimes N_j - \frac{1}{\kappa} \tensor{\epsilon}{_j^{kl}} P_k \otimes M_l,
\end{align}%
\begin{align}
    \varepsilon(P_0) = \varepsilon (P_j) = \varepsilon(M_j) = \varepsilon(N_j) = 0, &&
    \varepsilon(\mathcal{E})=1,
\end{align}%
\begin{align}
		S(P_0)&=-P_0, &
		S(\mathcal{E}) &= \mathcal{E}^{-1}, &
		S(P_j) &= -\mathcal{E}^{-1}P_j,
\end{align}%
    \vspace{\dimexpr-\abovedisplayskip-\baselineskip+\jot}%
\begin{align}
		S(M_j) &= -M_j, &
		S(N_j) &= -\mathcal{E}^{-1}(N_j-\dfrac{1}{\kappa} \tensor{\epsilon}{_j^{kl}} P_k M_l).
\end{align}%
    \label{eq:kM_kP_Hopf_alg}
\end{subequations}
The deformation parameter $\kappa$ has mass dimension $1$. It is often associated to the Planck mass.

\paragraph{}
One then defines $\kappa$-Minkowski as the dual Hopf algebra of the deformed translations, that is $\kM = \mathcal{T}_\kappa'$. This duality allows to compute the full Hopf algebra structure of $\kM$. One obtains that $\kM$ is generated
by $\{p_\mu\}_{\mu \in \{0, \dots, d\}}$ that satisfies
\begin{align}
    [p_0, p_j] &= \frac{i}{\kappa} p_j, &
    [p_j, p_k] &= 0, 
    \tag{\ref{eq:kM_kM_Hopf_alg_alg}}\\
    \Delta(p_\mu) &= p_\mu \otimes 1 + 1 \otimes p_\mu, &
    S(p_\mu) &= - p_\mu,
    \tag{\ref{eq:kM_kM_Hopf_alg_coalg}}
\end{align}
where we denoted, $[p, q] = p \star_\kappa q - q \star_\kappa p$ and $\star_\kappa$ is the associative star-product of $\kM$.

\paragraph{}
In terms of symmetries, the dual structure allows $\mathcal{P}_\kappa$ to act on $\kM$ through
\begin{subequations}
\begin{align}
    (P^\mu \actl f)(p) &= -i \frac{\partial}{\partial p_\mu} f(p), &
    (\mathcal{E} \actl f)(p) &= f\left(p_0 + \frac{i}{\kappa}, \vec{p} \right), \\
    (M^j \actl f)(x) &= \big( \tensor{\epsilon}{^{jk}_{l}} p_k P^l \actl f \big) (p), &&
\end{align}%
    \vspace{\dimexpr-\abovedisplayskip-\baselineskip+\jot}%
\begin{align}
    (N^j \actl f)(p)
    &= \left( \Big( \frac{1}{2}p^j \big( \kappa (1 - \mathcal{E}^2) + \frac{1}{\kappa} P_lP^l \big) + p_0 P^j - \frac{i}{\kappa} p_k P^k P^j  \Big) \actl f \right)(x),
\end{align}
    \label{eq:kM_kP_action}
\end{subequations}
where $f \in \kM$.

\paragraph{}
As \eqref{eq:kM_kM_Hopf_alg_alg} correspond to the Lie algebra of the affine group $\mathbb{R} \ltimes \mathbb{R}^d$, one can use the convolution algebra machinery to built the following star-product \cite{DS, PW2018}
\begin{align}
    (f \star_\kappa g)(p) &= \int \frac{\dd y^0}{2 \pi} \dd q_0\ e^{-i y^0 q_0} f(p_0 + q_0, \vec{p}) g(p_0, e^{-y^0/\kappa} \vec{p}), 
    \label{eq:kM_star-kappa}\\
    f^\dag(p) &= \int \frac{\dd y^0}{2\pi} \dd q_0\ e^{-i y^0 q_0} \overline{f} (p_0 + q_0, e^{-y^0/\kappa} \vec{p})
    \label{eq:kM_invol-kappa}.
\end{align}

\section*{Acknowledgement}
\paragraph{}
The authors thank the Action CA21109 CaLISTA ``Cartan geometry, Lie, Integrable Systems, quantum group Theories for Applications'', from the European Cooperation in Science and Technology (COST).

\paragraph{}
K.H.~has been partially supported, during this work, by the grants CNS2023-143760, funded by NextGenerationEU and PID2023-148373NB-I00 funded by \\
MCIN/AEI/10.13039/501100011033/FEDER – UE.

\section*{Declaration of competing interests}
\paragraph{}
The authors declare that they have no known competing financial interests or non-financial interest or personal relationships that could have appeared to influence the work reported in this paper.


\end{document}